\newcommand{\gf}{\mathbb{F}_2}
\newtheorem{theorem}{Theorem}
\newtheorem{lemma}{Lemma}
\newtheorem{proposition}{Proposition}
\newtheorem{corollary}{Corollary}
\newproof{proof}{Proof}
\journal{Theoretical Computer Science}
\begin{document}
\begin{keyword}
Circuit Complexity\sep Cancellation-free \sep Linear Circuits
\end{keyword}

\title{
Cancellation-Free Circuits in Unbounded and Bounded Depth
\tnoteref{titlelabel}}
\tnotetext[titlelabel]{A preliminary version of this paper
appears in~\cite{DBLP:conf/fct/BoyarF13}.
Both authors are partially supported by the 
Danish Council for Independent Research, Natural Sciences.}

\author{Magnus Gausdal Find\fnref{magnuslabel}}
\ead{magnusgf@imada.sdu.dk}
\fntext[magnuslabel]{Part
of this work was done while visiting
the University of Toronto.}
\author{Joan Boyar\fnref{joanlabel}} 
\ead{joan@imada.sdu.dk}
\fntext[joanlabel]{Part
of this work was done while visiting
the University of Waterloo.}

\address{Department of Mathematics and Computer Science, University of Southern Denmark, Denmark}

\begin{abstract}
We study the notion of ``cancellation-free'' circuits.
This is a restriction of XOR circuits,
but can be considered as being
equivalent to previously studied models of computation.
The notion was coined by Boyar and Peralta in a study of heuristics
for a particular circuit minimization problem.
They asked how large a gap there can be between 
the smallest cancellation-free circuit and the smallest XOR
circuit.
We present a new proof showing that 
the difference can be a factor $\Omega(n/\log^{2}n)$.
Furthermore, our proof holds for circuits of constant depth.
We also study the complexity of computing the Sierpinski matrix
using cancellation-free circuits and give a tight
$\Omega(n\log (n))$ lower bound. 
\end{abstract}

\maketitle

\section{Introduction}
Let $\gf$ be the field of order $2$, and let $\gf^n$
be the $n$-dimensional vector space over $\gf$. For $n\in \mathbb{N}$, we let $[n]=\{1,\ldots ,n \}$.
A Boolean function $f\colon \gf^n \rightarrow \gf^m$ is said
to be linear if there exists a Boolean $m\times n$ matrix $A$
such that $f(\mathbf{x})=A\mathbf{x}$ for every $\mathbf{x}\in \gf^n$.
This is equivalent to saying that $f$ can be computed using
only XOR gates.

An \emph{XOR circuit} (or a \emph{linear circuit}) $C$ is a directed
acyclic graph.
There are $n$ nodes with in-degree $0$, called the \emph{inputs}.
All other nodes have in-degree $2$ and are called \emph{gates}.
There are $m$ nodes which are called the \emph{outputs}; these are
labeled $y_1,\ldots,y_m$.
The value of a gate is the sum
of its two children (addition in $\gf$, denoted $\oplus$).
The circuit $C$, with inputs $\mathbf{x}=(x_1,\ldots ,x_n)$,
\emph{computes} the $m\times n$ matrix $A$
if the output vector computed by $C$,
$\mathbf{y}=(y_1,\ldots , y_m)$,
satisfies $\mathbf{y}=A\mathbf{x}$.
In other words, output $y_i$ is defined by the $i$th row of the matrix.
The \emph{size} of
a circuit $C$, is the number of gates in $C$. 
The \emph{depth}
is the number of gates on a longest directed path from an input to an output.
For simplicity, we will let $m=n$ unless otherwise explicitly stated. For a
matrix $A$, let $|A|$ be the number of nonzero entries in $A$.

\paragraph{Our contributions:}
In this paper we deal with a restriction of XOR circuits called
\emph{cancellation-free} circuits, coined in \cite{boyarcombinationalappear},
where the authors noticed that many heuristics for finding small
XOR circuits always produce cancella\-tion-free XOR circuits.
They asked the question of how large a separation there
can be between these two models.
Recently,
Gashkov and Sergeev \cite{gashkov2011complexity} showed that the work of
Grinchuk and Sergeev \cite{grinchukandsergeev}
implied a separation of
$\Omega\left(\frac{n}{\log^6 n\log\log n} \right)$.
An improved separation of $\Omega\left(\frac{n}{\log^2 n} \right)$
follows from Lemma 4.1 and Lemma 4.2
in \cite{DBLP:journals/siamcomp/Jukna06}, although this implied separation
was not published until recently \cite{juknasergeevSurvey}.
We present an alternative proof of the same separation.
Our proof is based on a different construction and uses communication complexity
in a novel way that might have independent interest. 
Like the separation implied in the work \cite{juknasergeevSurvey}, but unlike the
separations demonstrated in \cite{gashkov2011complexity,separatingnew}, our separation holds
even in the case of constant depth circuits.
We conclude that many heuristics for finding XOR circuits
do not approximate better than a factor of
$\Theta \left( \frac{n}{\log^{2}n} \right)$ of the optimal.
We also study the complexity of computing the Sierpinski matrix
using cancellation-free circuits. We show that the
complexity is exactly $\frac{1}{2}n\log n$. Furthermore,
our proof holds for OR circuits.
As a corollary to this we obtain an explicit matrix where the
smallest OR circuit is a factor $\Theta(\log n)$
larger than the smallest OR circuit for its complement.

We also study the complexity of computing the \emph{Sierpinski matrix} (described later),
and show a tight $\frac{1}{2}n\log n$ lower bound for OR circuits and cancellation-free circuits.
This results follows implicitly from the work of Kennes \cite{DBLP:journals/tsmc/Kennes92}, however
our proof is simpler and more direct. Also we hope that our proof can be strengthened to give an
$\omega (n)$ lower bound for XOR circuits for the Sierpinski matrix.
A similar lower bound was shown independently by Selezneva in \cite{seleznevaProc,seleznevaArticle}.

\section{Cancellation-Free XOR Circuits}
\label{cancelfreelinear}
For XOR circuits, the value computed by every gate is the parity
of a subset of the $n$ variables.
That is, the output of every gate $u$ can be considered
as a vector $\kappa(u)$ in the vector space $\gf^n$,
where $\kappa(u)_i=1$ if and only
if $x_i$ is a term in the parity function computed by the gate $u$.
We call $\kappa(u)$ the \emph{value vector} of $u$, and
for input variables define
$\kappa(x_i)=e^{(i)}$, the unit vector having the
$i$th coordinate $1$ and all others $0$.
It is clear by definition that if a gate $u$ has the two children $w,t$, then
$\kappa(u)=\kappa(w)\oplus~\kappa(t)$, where $\oplus$ denotes coordinate-wise
addition in $\gf$.
We say that an XOR circuit is \emph{cancellation-free} if for
every pair of gates $u,w$ where  $u$ is
an ancestor of $w$, then $\kappa(u)\geq \kappa(w)$,
where $\geq$ denotes the usual coordinate-wise partial order.
These are also called SUM circuits in
\cite{separatingnew,juknasergeevSurvey}.

If this is satisfied, the circuit never
exploits the $\gf$-identity, $a\oplus a=0$, so
things do not ``cancel out'' in the circuit.

Although it is not hard to see that
cancellation-free circuits  is equivalent to addition chains
\cite{DBLP:conf/focs/Pippenger76,DBLP:journals/siamcomp/Pippenger80}
and
``ensemble computations''
\cite{DBLP:books/fm/GareyJ79},
we stick to the term ``cancellation-free'',
since we will think
of it as a special case of XOR circuits. 

For a simple example demonstrating that cancellation-free
circuits indeed are less powerful than general XOR circuits, consider the matrix
\[
 A = \begin{pmatrix}
      1 & 1 & 0 &0 \\
      1 & 1 & 1 &0 \\
      1 & 1 & 1 &1 \\
      0 & 1 & 1 &1 \\
     \end{pmatrix}.
\]
In Figure~\ref{fig:examplefig}, two circuits computing the matrix
$A$ are shown, the circuit on the right uses cancellations, and the circuit on
the left is cancellation-free, and has one gate more.
For this particular matrix, any cancellation-free circuit must use
at least $5$ gates.

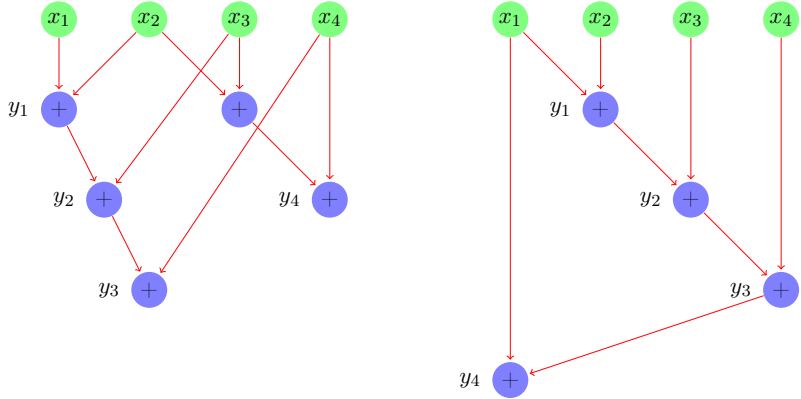
\begin{figure}
\begin{center}

\scalebox{0.8}{
  \begin{tikzpicture}[scale=0.75,shorten >=1pt,->,draw=black!50, node distance=\layersep,invis/.style={inner sep=0pt}]
    \tikzstyle{every pin edge}=[<-,shorten <=1pt]
    \tikzstyle{neuron}=[circle,fill=black!25,minimum size=17pt,inner sep=0pt]
    \tikzstyle{input neuron}=[neuron, fill=green!50];
    \tikzstyle{output neuron}=[neuron, fill=red!50];
    \tikzstyle{hidden neuron}=[neuron, fill=blue!50];
    \tikzstyle{annot} = [text width=4em, text centered]

%
%
%
%
%
%
%
%
%
%
%
%

%
 
  \foreach \name / \x in {1,2,3,4}
          \node[input neuron] (I-\name) at (2*\x,0) {$x_{\x}$};
 
     \node[invis](fisso) at (1,-9){};
 	\node[invis](fisso) at (9,1){}; 	
 
      \node[hidden neuron] (G-1) at (2,-2) {+};
      \path[color=red] (I-1) edge (G-1); 
 	\path[color=red] (I-2) edge (G-1);
     \node[left=0.05cm] at (G-1.west) {$y_1$};

 	\node[hidden neuron] (G-2) at (3,-4) {+};
      \path[color=red] (I-3) edge (G-2); 
 	\path[color=red] (G-1) edge (G-2);
 	\node[left=0.05cm] at (G-2.west) {$y_2$};
 
 	\node[hidden neuron] (G-3) at (4,-6) {+};
      \path[color=red] (I-4) edge (G-3); 
 	\path[color=red] (G-2) edge (G-3);
 	\node[left=0.05cm] at (G-3.west) {$y_3$};

 	\node[hidden neuron] (G-4) at (6,-2) {+};
      \path[color=red] (I-2) edge (G-4); 
 	\path[color=red] (I-3) edge (G-4);

 	\node[hidden neuron] (G-5) at (8,-4) {+};
      \path[color=red] (I-4) edge (G-5); 
 	\path[color=red] (G-4) edge (G-5);
 	\node[left=0.05cm] at (G-5.west) {$y_4$};

    \foreach \name / \x in {1,2,3,4}
         \node[input neuron] (QI-\name) at (2*\x+10,0) {$x_{\x}$};

    \node[invis](Qfisso) at (1,-9){};
	\node[invis](Qfisso) at (9,1){}; 	

     \node[hidden neuron] (QG-1) at (14,-2) {+};
     \path[color=red] (QI-1) edge (QG-1); 
	\path[color=red] (QI-2) edge (QG-1);
    \node[left=0.05cm] at (QG-1.west) {$y_1$};

	\node[hidden neuron] (QG-2) at (16,-4) {+};
     \path[color=red] (QI-3) edge (QG-2); 
	\path[color=red] (QG-1) edge (QG-2);
	\node[left=0.05cm] at (QG-2.west) {$y_2$};

	\node[hidden neuron] (QG-3) at (18,-6) {+};
     \path[color=red] (QI-4) edge (QG-3); 
	\path[color=red] (QG-2) edge (QG-3);
	\node[left=0.05cm] at (QG-3.west) {$y_3$};

	\node[hidden neuron] (QG-4) at (12,-8) {+};
     \path[color=red] (QI-1) edge (QG-4); 
	\path[color=red] (QG-3) edge (QG-4);
	\node[left=0.05cm] at (QG-4.west) {$y_4$};

%
%
%
%
%
%



\end{tikzpicture}
}
\end{center}
\caption{Two circuits computing the matrix $A$. The circuit on the left
is cancellation-free, and has size $5$ - one more more than the circuit
on to the right.\label{fig:examplefig}}
\end{figure}

A different, but tightly related kind of circuits is OR circuits.
The definition is exactly the same as for XOR circuits,
but with $\vee$ (logical OR) instead of $\oplus$, see
\cite{nechiporuk1963rectifier,juknasergeevSurvey,DBLP:books/fm/GareyJ79}.
Cancellation-free circuits is a special case of 
OR circuits and every cancellation-free circuit can be interpreted as an OR circuit for the same matrix, as well as an XOR circuit. 

For a matrix $A$, we will let
$C_\oplus(A)$, $C_{CF}(A)$, $C_\vee(A)$ denote the smallest
XOR circuit, the smallest cancellation-free circuit and the
smallest OR circuit computing the matrix $A$.

By the discussion above, the following is immediate:
\begin{proposition}
\label{orcancelremark}
For every matrix, $A$,  $C_\vee(A)\leq C_{CF}(A)$.
\end{proposition}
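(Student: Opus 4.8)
The plan is to observe that a cancellation-free circuit is, by the definitions given, simultaneously a valid XOR circuit and a valid OR circuit for the same matrix $A$, and that this is the only ingredient needed. Concretely, I would start from an optimal cancellation-free circuit for $A$, i.e.\ one of size $C_{CF}(A)$, and argue that reinterpreting each gate as an OR gate rather than an XOR gate yields an OR circuit computing $A$ with the same number of gates; since $C_\vee(A)$ is by definition the size of the \emph{smallest} OR circuit for $A$, this gives $C_\vee(A)\le C_{CF}(A)$.

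The key step is the claim that replacing $\oplus$ by $\vee$ at every gate of a cancellation-free circuit does not change the function computed. Here I would use the value-vector formalism from Section~\ref{cancelfreelinear}: in an XOR circuit every gate $u$ computes the parity of the variables in its support $\kappa(u)$, and for a gate with children $w,t$ one has $\kappa(u)=\kappa(w)\oplus\kappa(t)$. The cancellation-free condition says that whenever $u$ is an ancestor of $w$ we have $\kappa(u)\ge\kappa(w)$ coordinatewise; applied to the two children $w,t$ of a single gate $u$ (each of which is a descendant, hence a ``child-side'' of $u$ in the ancestor relation), it forces $\kappa(w)$ and $\kappa(t)$ to have disjoint supports, so that $\kappa(w)\oplus\kappa(t)=\kappa(w)\vee\kappa(t)$. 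Thus at every gate the XOR of the two incoming values coincides with their OR, and by induction on the circuit structure the OR-reinterpretation computes exactly the same value vector $\kappa(u)$ at every gate, hence the same matrix $A$ at the outputs. Equivalently, and more simply, one can just cite the sentence already in the text asserting that ``every cancellation-free circuit can be interpreted as an OR circuit for the same matrix,'' since the proposition is flagged as immediate from that discussion.

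I do not expect any real obstacle: the statement is essentially a definitional unwinding, and the only thing to be careful about is making the ancestor/descendant bookkeeping precise enough to conclude that the two children of a gate have disjoint value vectors (a gate is an ancestor of each of its children, so the cancellation-free inequality applies). Once disjointness of supports at each gate is in hand, the inductive argument that XOR and OR agree throughout the circuit is routine, and the size is literally unchanged because we have altered only the \emph{labels} on the gates, not the underlying directed acyclic graph.
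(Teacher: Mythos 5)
Your proof is correct and follows exactly the route the paper takes, namely that a cancellation-free circuit can be read verbatim as an OR circuit for the same matrix (the paper simply declares this ``immediate'' from the preceding discussion). Your added detail---that the cancellation-free condition applied to the two children of a gate forces their value vectors $\kappa(w)$ and $\kappa(t)$ to have disjoint supports, so $\kappa(w)\oplus\kappa(t)=\kappa(w)\vee\kappa(t)$ at every gate---is precisely the justification the paper leaves implicit, and it is argued correctly.
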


This means in particular that any 
lower bound for OR circuits carries over
to a lower bound for cancellation-free circuits.
However, the converse does not hold in general \cite{separatingnew}. A simple
example showing this is the matrix
\[
 B =
 \begin{pmatrix}
  0 & 0 & 1 & 1 & 0 & 0\\
  0 & 1 & 1 & 1 & 0 & 0\\
  1 & 1 & 1 & 1 & 0 & 0\\
  0 & 0 & 1 & 1 & 1 & 0\\
  0 & 0 & 1 & 1 & 1 & 1\\
  1 & 1 & 1 & 1 & 1 & 1\\
 \end{pmatrix}.
\]
For this matrix, there exist an OR circuit with $6$ gates, however
any cancellation-free circuit must have at least $7$ gates.

Every matrix admits a cancellation-free circuit
of size at most $n(n-1)$. This can be obtained simply
by computing each row independently.
It was shown by Nechiporuk \cite{nechiporuk1963rectifier}
and Pippenger
\cite{DBLP:conf/focs/Pippenger76} (see also \cite{juknasergeevSurvey}) that
this upper bound can be improved
to $(1+o(1))\frac{n^2}{2\log n}$.

A Shannon-style counting argument gives that this
is tight up to low order terms. A proof of this can be found in
\cite{DBLP:conf/focs/Pippenger76}.
Combining these results,  we get that for most 
matrices, cancellation does not help much:

\begin{theorem}
For every $0<\epsilon<1$, for sufficiently large $n$, a
random $n\times n$ matrix has
$\frac{C_{CF}(A)}{C_\oplus(A)}\leq 1+\epsilon$ with probability at least
$1-\epsilon$.
\end{theorem}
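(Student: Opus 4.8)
The plan is to sandwich the ratio $C_{CF}(A)/C_\oplus(A)$ between two asymptotically matching quantities: a worst-case upper bound on the cancellation-free complexity that holds for \emph{every} $A$, and an almost-sure lower bound on the XOR complexity. First I would recall the Nechiporuk--Pippenger construction mentioned above: every $n\times n$ matrix $A$ is computed by a circuit of size $(1+o(1))\frac{n^2}{2\log n}$ obtained purely by forming XORs of subsets of the input variables, so that the value vector of each gate dominates those of all its descendants. The point to stress is that this construction is genuinely cancellation-free (equivalently, it may be read as a monotone OR circuit), so it bounds $C_{CF}(A)$ and not merely $C_\oplus(A)$. Write $u(n)=(1+o(1))\frac{n^2}{2\log n}$ for this quantity, so that $C_{CF}(A)\le u(n)$ for all $n\times n$ matrices $A$.

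Next I would invoke the Shannon-style counting bound (a proof of which is in~\cite{DBLP:conf/focs/Pippenger76}): for every fixed $\delta>0$ the number of $n\times n$ matrices computable by an XOR circuit of size at most $(1-\delta)\frac{n^2}{2\log n}$ is a vanishing, indeed exponentially small, fraction of all $2^{n^2}$ such matrices. Hence, for a uniformly random $A$ and any fixed $\delta>0$, we have $C_\oplus(A)\ge (1-\delta)\frac{n^2}{2\log n}$ with probability $1-o(1)$, in particular with probability at least $1-\epsilon$ once $n$ is large enough.

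It then remains to combine the two estimates. On the event $C_\oplus(A)\ge(1-\delta)\frac{n^2}{2\log n}$ we get $\frac{C_{CF}(A)}{C_\oplus(A)}\le \frac{u(n)}{(1-\delta)\frac{n^2}{2\log n}}=\frac{1+o(1)}{1-\delta}$, where the $o(1)$ depends only on $n$. Given $\epsilon\in(0,1)$, I would first fix $\delta=\delta(\epsilon)$ small enough that $\tfrac{1}{1-\delta}$ is safely below $1+\epsilon$ (for instance $\delta=\epsilon/4$), and then take $n$ large enough that the $o(1)$ term is smaller than, say, $\epsilon/4$ and that the counting bound makes the failure probability less than $\epsilon$; this yields $\frac{C_{CF}(A)}{C_\oplus(A)}\le 1+\epsilon$ with probability at least $1-\epsilon$.

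I do not expect a genuine obstacle: all the substance lies in the two cited results, and the theorem is essentially their juxtaposition — the universal upper bound for the weaker (cancellation-free) model already comes within a $1+o(1)$ factor of the almost-sure lower bound for the stronger (general XOR) model. The only points that require care are checking that the upper-bound construction one cites is in fact cancellation-free (otherwise the statement is vacuous) and keeping the two independent $o(1)$ terms, the choice of $\delta$, and the failure probability all correctly tied to $\epsilon$; these are routine.
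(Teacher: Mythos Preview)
Your proposal is correct and matches the paper's own argument: the paper does not give a detailed proof but simply states that the theorem follows by ``combining these results,'' namely the Nechiporuk--Pippenger cancellation-free upper bound $(1+o(1))\tfrac{n^2}{2\log n}$ and the Shannon-style counting lower bound for $C_\oplus$. Your write-up is exactly this combination, with the $\delta$/$\epsilon$ bookkeeping made explicit.
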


We also use the following upper bound, which also holds for
cancellation-free circuits, and hence also for OR circuits
and XOR circuits.

\begin{theorem}[Lupanov \cite{Lupanov1956}]
\label{upperboundtranspo}
  Any $m\times n$ matrix, admits a cancellation-free XOR circuit of size
$O\left( \min\{ \frac{mn}{\log n}, \frac{mn}{\log m}\}+n+m\right)$.
\end{theorem}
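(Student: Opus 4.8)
The plan is to establish Theorem~\ref{upperboundtranspo} by a block decomposition of the matrix together with the standard ``Four Russians''-style precomputation of all parities of small groups of variables. Without loss of generality assume $n \le m$, so that $\min\{mn/\log n, mn/\log m\} = mn/\log n$; the symmetric case follows by transposing the roles, noting that a cancellation-free circuit for $A$ yields one for $A^T$ of essentially the same size (or, more simply, by rerunning the argument with the roles of rows and columns swapped).

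First I would partition the $n$ columns into $\lceil n/k \rceil$ consecutive blocks of size at most $k$, where $k = \lfloor \log n - 2\log\log n \rfloor$ (the precise second-order term is chosen at the end to make the bookkeeping work). For a fixed block $B$ of $k$ variables, I build a ``universal'' cancellation-free subcircuit that computes \emph{every} one of the $2^k - 1$ nonempty sums of variables in $B$. This can be done with at most $2^k$ gates: enumerate the nonempty subsets of $B$ in order of increasing size, and to form the sum for a set $S$ with $|S| \ge 2$, pick any $i \in S$ and add $x_i$ to the already-computed sum for $S \setminus \{i\}$. Since each value vector is the indicator of a subset of $B$ and along any path in this subcircuit the subsets are strictly increasing, the subcircuit is cancellation-free. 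Doing this for all blocks costs at most $\lceil n/k \rceil \cdot 2^k$ gates in total.

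Next, each output $y_j$ (row $j$ of $A$) restricted to a block $B$ is some subset-sum over $B$, hence already available as the output of a gate in $B$'s universal subcircuit (or is $0$, in which case we skip it). So $y_j$ is the XOR of at most $\lceil n/k \rceil$ such precomputed values, which we combine with at most $\lceil n/k \rceil - 1$ further gates per output, arranged left-to-right so that each partial sum is an OR-monotone (coordinate-wise increasing) combination of blocks — this keeps the whole construction cancellation-free. Summing over all $m$ outputs, the combining phase uses at most $m(\lceil n/k\rceil - 1)$ gates. The total is therefore at most $\lceil n/k \rceil 2^k + m\lceil n/k \rceil + O(n+m)$. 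Plugging in $k = \log n - 2\log\log n$ gives $2^k = n/\log^2 n$, so the first term is $O\!\left(\frac{n}{k}\cdot\frac{n}{\log^2 n}\right) = O\!\left(\frac{n^2}{\log^3 n}\right) = O\!\left(\frac{mn}{\log n}\right)$, and the second term is $O\!\left(\frac{mn}{\log n}\right)$ as well, yielding the claimed bound $O\!\left(\frac{mn}{\log n} + n + m\right)$.

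The step I expect to need the most care is verifying that the combining phase really stays cancellation-free: within a single block the monotonicity is automatic, but when XOR-ing precomputed values from \emph{different} blocks one must check that no coordinate ever cancels. This holds because the blocks are disjoint, so the value vectors being added have disjoint supports and coordinate-wise XOR coincides with coordinate-wise OR; hence every partial sum dominates its predecessors and the ancestor condition $\kappa(u) \ge \kappa(w)$ is met throughout. The only remaining subtlety is the precise choice of $k$ and the rounding in $\lceil n/k\rceil$, which I would handle by the usual slack argument — choosing $k$ slightly below $\log n$ so that $2^k/\log n = o(n)$ absorbs all lower-order terms into the $O(n+m)$ slack. (One should also note the degenerate regime where $n$ is tiny relative to $m$ or vice versa, where the trivial $O(mn)$ bound already dominates $O(mn/\log n)$ only up to constants, so the additive $n+m$ term and the choice of base of the logarithm are what make the statement literally correct for all $m,n$.)
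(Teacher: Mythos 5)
Your overall plan --- a ``Four Russians'' block decomposition with precomputed subset sums, plus a careful check that disjoint column blocks keep everything cancellation-free --- is a legitimate from-scratch route (the paper itself does not reprove this; it just cites Lupanov's construction together with the transposition principle). Your verification of cancellation-freeness (monotone chains inside a block, disjoint supports across blocks) is correct. But there is a genuine gap in the quantitative part: your case analysis proves the \emph{wrong half} of the minimum. If $n\le m$ then $\log n\le\log m$, so $\min\{mn/\log n,\,mn/\log m\}=mn/\log m$, not $mn/\log n$ as you assert. Your construction, with block size $k\approx\log n$, yields $O(mn/\log n+n+m)$ in the regime $n\le m$, which is the \emph{larger} of the two terms; symmetrically in the other regime. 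So what you establish is $O\bigl(mn/\log(\min\{m,n\})+n+m\bigr)$ rather than the claimed $O\bigl(mn/\log(\max\{m,n\})+n+m\bigr)$. This is not a cosmetic issue: the paper applies the theorem to $n\times 14\log n$ and $14\log n\times n$ matrices to get circuits of size $O(n)$, and your weaker bound only gives $O(n\log n/\log\log n)$ there.

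The fix is to tie the block size to the number of \emph{rows}: take $k\approx\log m-2\log\log m$ (handling separately the degenerate case $k>n$, where all $2^n\le m$ subset sums can be precomputed outright in $O(m+n)$ gates). Then the precomputation costs $(n/k)2^k=O\bigl(\frac{n}{\log m}\cdot\frac{m}{\log^2 m}\bigr)=O(mn/\log m)$ and is dominated by the combining cost $m\lceil n/k\rceil=O(mn/\log m)$, giving the correct bound when $m\ge n$. For $m<n$ you cannot simply ``rerun the argument with rows and columns swapped'' --- the construction is not symmetric, since columns index inputs and rows index outputs --- so you genuinely need the transposition principle (reverse all wires of a circuit for $A^{T}$); it does preserve cancellation-freeness because it is valid over an arbitrary semiring, but it is a lemma that must be invoked, not an afterthought. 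This is precisely the route the paper takes.
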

The theorem follows directly from Lupanov's result and an application of the
``transposition principle'' (see e.g. \cite{juknabook}).

A matrix $A$ is \emph{$k$-free} if it does not have an
all one submatrix of size $(k+1)\times (k+1)$.
The following  lemma will be used later.
According to Jukna and Sergeev \cite{juknasergeevSurvey},
it was independently due to Nechiporuk \cite{nechiporuktopologicalprinciples},
Mehlhorn \cite{mehlhorn1979some},
Pippenger \cite{DBLP:journals/tcs/Pippenger80}, and Wegener \cite{DBLP:journals/acta/Wegener80}.

\begin{lemma}[Nechiporuk, Mehlhorn, Pippenger, Wegener]
\label{freelemma}
For $k$-free $A$, $C_\vee (A) \in \Omega\left(\frac{|A|}{k^2} \right)$.
\end{lemma}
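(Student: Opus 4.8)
The plan is to take an optimal OR circuit $C$ for $A$, with $s=C_\vee(A)$ gates, and to assign every $1$-entry of $A$ to a \emph{wire} of $C$ in such a way that no wire receives more than $k^2$ entries. Since a circuit with $s$ gates has at most $2s$ wires, this yields $|A|\le O(sk^2)$, i.e.\ $s\in\Omega(|A|/k^2)$, after disposing of one cheap boundary case. (We may assume $A\neq 0$, so $k\ge 1$, and as usual that output nodes are gates, so there are at most $s$ of them.)

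First I would record the standard bookkeeping. For a node $g$ of $C$ write $\mathrm{col}(g)=\kappa(g)\subseteq[n]$ for the set of inputs that reach $g$, and $\mathrm{row}(g)\subseteq[n]$ for the set of outputs reachable from $g$; for the output node $y_i$ we have $i\in\mathrm{row}(y_i)$ and $\mathrm{col}(y_i)$ is exactly the support of the $i$-th row of $A$, so $|A|=\sum_i|\mathrm{col}(y_i)|$ and every $1$-entry $(i,j)$ admits a directed path from $x_j$ to $y_i$. Two facts suffice, both immediate from the definition of an OR circuit: (i) along any directed path $\mathrm{col}(\cdot)$ is non-decreasing while $\mathrm{row}(\cdot)$ is non-increasing; (ii) for every node $g$ the submatrix $\mathrm{row}(g)\times\mathrm{col}(g)$ of $A$ is all-ones, hence by $k$-freeness $\min\bigl(|\mathrm{row}(g)|,|\mathrm{col}(g)|\bigr)\le k$.

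Fix for each $1$-entry $(i,j)$ a directed path $x_j=g_0\to g_1\to\cdots\to g_\ell=y_i$; along it $|\mathrm{col}(g_0)|=1$ and $|\mathrm{col}(g_\ell)|$ equals the weight $w_i$ of row $i$. If $w_i\le k$ I assign $(i,j)$ to the output node $y_i$; then each such output receives $w_i\le k$ entries, for a total of at most $sk$. If $w_i>k$, let $t^\ast$ be the largest index with $|\mathrm{col}(g_{t^\ast})|\le k$ (it exists since $|\mathrm{col}(g_0)|=1\le k$, and $t^\ast<\ell$ since $|\mathrm{col}(g_\ell)|=w_i>k$), and assign $(i,j)$ to the wire $e=(g_{t^\ast}\to g_{t^\ast+1})$. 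Maximality of $t^\ast$ forces $|\mathrm{col}(g_{t^\ast+1})|>k$, so by (ii) $|\mathrm{row}(g_{t^\ast+1})|\le k$. Hence any entry assigned to a fixed wire $u\to w$ satisfies $j\in\mathrm{col}(g_{t^\ast})=\mathrm{col}(u)$ (monotonicity of $\mathrm{col}$) and $i\in\mathrm{row}(g_\ell)\subseteq\mathrm{row}(g_{t^\ast+1})=\mathrm{row}(w)$, with $|\mathrm{col}(u)|\le k$ and $|\mathrm{row}(w)|\le k$; so all entries on this wire lie in the box $\mathrm{row}(w)\times\mathrm{col}(u)$, of size at most $k^2$. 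Summing over the at most $2s$ wires and adding the at most $sk$ entries from the boundary case gives $|A|\le 2sk^2+sk\le 3sk^2$ (as $k\ge 1$).

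I expect the only non-routine step to be deciding what to charge to. Assigning $(i,j)$ to the first gate on its path, or to $y_i$, bounds the number of entries per object only by $|\mathrm{row}|\cdot|\mathrm{col}|\le kn$, which loses a factor $n/k$. The trick is that $\mathrm{col}$ and $\mathrm{row}$ move in opposite directions along the path while $\min(|\mathrm{row}|,|\mathrm{col}|)\le k$ holds \emph{everywhere}, so there is a single wire straddling the threshold whose tail already has $\le k$ columns and whose head already has $\le k$ rows; charging to that wire and reading off \emph{both} of its endpoints traps each $1$-entry in a $k\times k$ box. The monotonicity facts, the all-ones property of $\mathrm{row}(g)\times\mathrm{col}(g)$, and the low-weight-row case are all straightforward.
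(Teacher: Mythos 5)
Your proof is correct. The paper does not actually prove this lemma --- it cites it as a classical result of Nechiporuk, Mehlhorn, Pippenger and Wegener --- and your wire-charging argument (locating on each $x_j$-to-$y_i$ path the unique wire where $|\mathrm{col}(\cdot)|$ crosses the threshold $k$, then using $k$-freeness of the all-ones rectangle $\mathrm{row}(g)\times\mathrm{col}(g)$ to force $|\mathrm{row}(\cdot)|\le k$ at its head, trapping each $1$-entry in a $k\times k$ box per wire) is precisely the standard proof from those sources. The one caveat, which you already flag, is the convention that every output is a gate; without it the stated bound fails on degenerate examples such as the identity matrix, so this is an issue with the statement's normalization rather than a gap in your argument.
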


\section[Relationship]{Relationship
Between Cancellation-Free XOR Circuits and General XOR Circuits}
In \cite{boyarcombinationalappear}, Boyar and Peralta exhibited
an infinite family of matrices where the sizes of the cancellation-free circuits computing them
are at least $\frac{3}{2}-o(1)$ times the corresponding sizes for smallest XOR circuits for them. 
We call this ratio the \emph{cancellation ratio}, $\rho(n)$, defined
as

\[
\rho(n) = \max_{A\in \gf^{n\times n}} \frac{C_{CF}(A)}{C_\oplus(A)}.
\]
The following proposition on the Boolean Sylvester-Hadamard matrix
was pointed out by Edward Hirsch
and Olga Melanich \cite{edwardolga}.
The $n\times n$ Boolean Sylvester-Hadamard matrix $H_n$, 
is defined recursively:
\[
H_1 = (1), H_{2n}=
\begin{pmatrix}
  H_{n} & H_n\\
  H_{n} & \overline{H}_n
\end{pmatrix}
\]
Where $\overline{A}$ means the Boolean complement of the matrix
$A$.
It is known that $C_{\oplus}(H_n)\in O(n)$, but that in depth
$2$ it requires circuits of size $\Omega(n\log n)$ \cite{AlonKW90}.

\begin{proposition}
  The $n\times n$ Boolean Sylvester-Hadamard matrix requires
can\-cel\-lation-free circuits of size
$C_{CF}(H_n)\in \Omega(n\log n)$.
\end{proposition}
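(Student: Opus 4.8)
The plan is to exploit the fact, recorded in the excerpt, that cancellation-free circuits are a special case of OR circuits, so that $C_{CF}(H_n) \geq C_\vee(H_n)$, and then to lower bound $C_\vee(H_n)$ via Lemma~\ref{freelemma}. The obstacle is that $H_n$ itself is not $k$-free for small $k$: the all-ones column (the first column) and other structure give large all-ones submatrices. So a direct application of Lemma~\ref{freelemma} to $H_n$ fails. The idea is instead to extract, for each scale $2^i$ with $1 \leq 2^i \leq n$, a submatrix of $H_n$ that is $k$-free for a small constant $k$ and has about $n^2/2^i$ many ones; summing the resulting $\Omega(n^2/2^i)$ bounds... no — that would overcount, since a single gate can be reused. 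Let me instead argue scale by scale on disjoint gate sets.

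Here is the cleaner route I would take. Recall $H_{2m}$ contains, in its bottom-right block, a copy of $\overline{H}_m$. Unwinding the recursion, $H_n$ (with $n = 2^t$) contains, along a suitable ``diagonal'' of sub-blocks, copies of $\overline{H}_{n/2}, \overline{H}_{n/4}, \ldots$ together with the structural blocks. Concretely, for each level $i = 0, 1, \ldots, t-1$, consider the set $R_i$ of output rows and $S_i$ of input columns that first get ``separated'' at level $i$ of the recursion; the induced submatrix $A_i = H_n[R_i, S_i]$ is, up to the recursive definition, essentially $\overline{H}_{n/2^{i+1}}$ repeated in a block pattern, and — crucially — a copy of $\overline{H}_m$ for $m\geq 2$ contains no $2\times 2$ all-ones submatrix in a ``fresh'' set of rows/columns coming from its defining $H$-blocks, so these pieces are $1$-free (i.e.\ $k$-free with $k = 1$) after one removes the trivial all-ones parts. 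For a $1$-free matrix $M$, Lemma~\ref{freelemma} gives $C_\vee(M) \in \Omega(|M|)$, but more useful here is the elementary observation behind that lemma: in an OR circuit for a $1$-free matrix, the ones must essentially be produced ``one per gate-contribution,'' which localizes gate usage.

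The key step, and the main obstacle, is the bookkeeping that makes the scales add up rather than collide. The right formalization is: assign to each gate $g$ the scale $i$ at which it does ``useful work'' (e.g.\ the largest $i$ such that $g$'s value vector has support meeting $S_i$ in more than one coordinate, appropriately defined), show each gate is charged to at most $O(1)$ scales, and show that realizing the $1$-free submatrix $A_i$ requires $\Omega(n/2^i \cdot 2^i) = \Omega(n)$ gates charged to scale $i$ — wait, one needs $\Omega(n)$ per scale and $\Omega(\log n)$ scales. Since there are $t = \log_2 n$ scales, each contributing $\Omega(n)$ gates that are essentially disjoint across scales, we conclude $C_\vee(H_n) \in \Omega(n\log n)$, and hence $C_{CF}(H_n) \in \Omega(n\log n)$ by Proposition~\ref{orcancelremark}. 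I expect the delicate part to be the precise definition of the scale-charging so that (a) each scale genuinely forces $\Omega(n)$ distinct gates via the $1$-freeness of $A_i$ and Lemma~\ref{freelemma} (or a hand-tailored version of its proof giving an additive rather than merely asymptotic count), and (b) no gate is double-counted across more than a constant number of scales; once that is set up, the arithmetic $\sum_{i=0}^{\log n - 1} \Omega(n) = \Omega(n\log n)$ is routine.
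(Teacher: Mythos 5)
There is a genuine gap --- in fact two. First, the combinatorial cornerstone of your plan is false: the blocks you extract are not $1$-free. A copy of $\overline{H}_m$ records the pairs $(x,y)$ with $\langle x,y\rangle=1$ over $\gf^{\log m}$, and already for $m=8$ it contains a $2\times 2$ all-ones submatrix (take $x_1=100$, $x_2=010$, $y_1=110$, $y_2=111$: all four inner products are $1$); more generally, using affine subspaces one finds all-ones rectangles of side roughly $\sqrt{m}$ in both $H_m$ and $\overline{H}_m$. So each scale is only about $\sqrt{m}$-free, Lemma~\ref{freelemma} then yields only $\Omega(|A_i|/m)$ per scale, and the arithmetic $\sum_i \Omega(n)=\Omega(n\log n)$ no longer goes through. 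Second, even granting some freeness, the entire content of the proof would live in the charging scheme you defer to points (a) and (b): Lemma~\ref{freelemma} lower-bounds the size of a whole circuit computing a matrix, and there is no off-the-shelf way to apply it to a submatrix while guaranteeing that the gates it ``uses'' are disjoint from those charged to other scales. As written, the argument does not establish the claim, and note also that it aims at the stronger statement $C_\vee(H_n)\in\Omega(n\log n)$, which your sketch does not come close to certifying.

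For comparison, the paper's proof is essentially one line and uses none of this machinery: it invokes Morgenstern's bound (Theorem~\ref{morgensternlb}), which states that $C_{CF}(M)\in\Omega(\log|\det(M)|)$ for circuits without cancellation, together with the fact that $\log|\det(H_n)|\in\Omega(n\log n)$. This exploits that cancellation-free circuits are a special case of linear circuits over $\mathbb{C}$ with bounded scalars, rather than viewing them as OR circuits; the determinant argument is exactly the tool that is available in the cancellation-free setting but not for OR circuits, which is why the paper does not need (and does not claim) an OR-circuit lower bound for $H_n$.
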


Since $\log |\det(H_n)| \in \Omega(n\log n)$,
this proposition follows from following theorem due to  
Morgenstern, (\cite{DBLP:journals/jacm/Morgenstern73}, see also
\cite[Thm. 13.14]{DBLP:books/daglib/0090316}).

\begin{theorem}[Morgenstern]\label{morgensternlb}
  For a Boolean matrix  $M$,
\[
C_{CF}\in \Omega(\log |\det(M)|).
\]
\end{theorem}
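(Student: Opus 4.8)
The plan is to reduce the statement to Morgenstern's classical determinant (volume) argument for bounded‑coefficient linear circuits, the point being that a cancellation‑free XOR circuit, read correctly, \emph{is} an integer linear circuit with all coefficients equal to $1$. First I would record the combinatorial meaning of cancellation‑freeness at a single gate: if $u$ has children $w,t$ then $\kappa(u)=\kappa(w)\oplus\kappa(t)$ while $\kappa(u)\ge\kappa(w)$ and $\kappa(u)\ge\kappa(t)$ coordinatewise, which is possible only if $\kappa(w)$ and $\kappa(t)$ have disjoint supports. Hence, reinterpreting every $\oplus$ in the circuit as ordinary integer addition, a straightforward induction in topological order shows that the integer vector computed at each gate is exactly its value vector $\kappa(u)$ read as a $0/1$ vector over $\mathbb{Z}$ (no carry is ever created, precisely because the two summands have disjoint supports). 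Consequently the reinterpreted object is a linear arithmetic circuit over $\mathbb{R}$ in which every gate computes $x+y$, and it computes the map $\mathbf{x}\mapsto M\mathbf{x}$ where $M$ is the given $0/1$ matrix read over $\mathbb{Z}$. This step is where cancellation‑freeness is indispensable: for an arbitrary XOR circuit the reinterpretation would not compute $M$, and indeed the conclusion of the theorem is false for general XOR circuits, since $C_\oplus(H_n)\in O(n)$.

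Second I would run Morgenstern's potential argument on the reinterpreted circuit. Fix a topological order $g_1,\dots,g_s$ of the gates and let $A_t$ be the set of value vectors available after $t$ steps, i.e.\ the $n$ unit vectors together with the vectors of $g_1,\dots,g_t$; set
\[
\Phi_t = \max\{\,|\det N| : N \text{ is an } n\times n \text{ matrix each of whose rows lies in } A_t\,\}.
\]
Then $\Phi_0=1$, since $A_0$ is the standard basis and only permutation matrices avoid a repeated row. If adding $g_t$ does not enlarge $A_{t-1}$ then $\Phi_t=\Phi_{t-1}$; otherwise write the new vector as $u+v$ with $u,v\in A_{t-1}$ (the value vectors of its two children), take a matrix $N$ achieving $\Phi_t$, and expand $\det N$ along the row equal to $u+v$: by linearity of the determinant in that row it splits as a sum of two determinants of matrices with all rows in $A_{t-1}$, so $|\det N|\le 2\Phi_{t-1}$. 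Either way $\Phi_t\le 2\Phi_{t-1}$, hence $\Phi_s\le 2^s$. Since the $n$ rows of $M$ all lie in $A_s$ (they are the output gates), $M$ is itself an admissible matrix, giving $|\det M|\le\Phi_s\le 2^s$, i.e.\ $s\ge\log_2|\det M|$, which is the claimed $C_{CF}(M)\in\Omega(\log|\det M|)$ bound.

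The one step that needs care — and which carries the whole argument — is the reinterpretation in the first paragraph: over $\mathbb{F}_2$ the determinant of $M$ is just $0$ or $1$, so the bound can only refer to the determinant of $M$ taken over $\mathbb{Z}$, and it is exactly cancellation‑freeness that lets one recover an integer linear circuit for that integer matrix. Beyond that the potential argument is routine: one checks that $\Phi_t\le 2\Phi_{t-1}$ still holds when $g_t$ merely repeats an already available value vector, or when a candidate matrix has two equal rows (determinant $0$ in both cases), and that the multi‑output setting is harmless since all $n$ output rows lie in $A_s$ at once. For the intended application it then remains only to note $\log|\det H_n|\in\Omega(n\log n)$, which gives $C_{CF}(H_n)\in\Omega(n\log n)$.
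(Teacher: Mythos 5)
Your proof is correct. The paper itself does not prove this statement: it cites Morgenstern's original paper and remarks only that cancellation-free circuits are a special case of linear circuits over $\mathbb{C}$ with addition and scalar multiplication by constants of modulus at most $2$, to which Morgenstern's bound applies. You reconstruct the full argument, and the two places where you add substance over the paper's one-line justification are exactly the right ones. First, the embedding step: you observe that $\kappa(u)=\kappa(w)\oplus\kappa(t)$ together with $\kappa(u)\ge\kappa(w)$ and $\kappa(u)\ge\kappa(t)$ forces the supports of $\kappa(w)$ and $\kappa(t)$ to be disjoint, so every gate computes genuine integer addition and the circuit computes the integer matrix $M$ over $\mathbb{Z}$ — this is the content of the paper's unproved claim that cancellation-free circuits are a ``special case'' of the bounded-coefficient model, and your remark that the determinant must be taken over $\mathbb{Z}$ rather than $\mathbb{F}_2$ is the correct reading of the theorem statement. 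Second, the potential argument with $\Phi_t\le 2\Phi_{t-1}$ via row-linearity of the determinant is Morgenstern's original proof, and you handle the degenerate cases (repeated rows, gates that recompute an existing vector) that make the induction go through. So your route is not different in substance from the one the paper points to, but it is self-contained where the paper defers to a citation; the only thing it costs you is length, and what it buys is that the disjoint-support observation — the one genuinely circuit-theoretic step — is made explicit rather than asserted.
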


The statement holds more generally, namely
for circuits with addition
over the complex numbers and scalar multiplication by
any constant $c\in \mathbb{C}$ with $|c|\leq 2$. Cancellation-free
circuits can be seen as a special case of this.

Using the recursive structure of $H_n$, it is not hard to show
that $C_{\oplus}(H_n)\in O(n)$, so this
demonstrates that $\rho(n)\in \Omega(\log n)$.
It should be noted that no $n\times n$ Boolean matrix can have
determinant larger than $n!$, 
so this technique cannot give 
lower a bound on $\rho(n)$ stronger than $O(\log n)$.

As mentioned in the introduction, the ratio 
\[
\lambda (n) = \max_{A\in \gf^{n\times n}}\frac{C_\vee(A)}{C_{\oplus}(A)}
\]
has been studied, (see \cite{gashkov2011complexity,juknasergeevSurvey}).
Using the techniques of \cite{DBLP:journals/siamcomp/Jukna06}, it can
be derived (as is done in \cite{juknasergeevSurvey}) that $\lambda(n)\in \Omega(n/\log^2 n)$.


We present a different construction exhibiting the same gap. The construction
is different, and in some sense simpler. Furthermore our proof is quite different.
More concretely we use communication complexity for the analysis
to show that certain conditional random variables are
almost uniformly distributed in a way
that might have independent interest.
Also our construction gives a similar separation for
circuits of constant depth (see Section \ref{sec:constantdepth}).

\begin{theorem}\label{separationthm}
\(\lambda(n)
\in   
 \Omega\left(\frac{n}{\log ^{2} n} \right).
\)
\end{theorem}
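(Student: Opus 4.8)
The plan is to exhibit an explicit family of $n \times n$ matrices $A = A_n$ for which the XOR complexity is small, $C_\oplus(A) \in O(n)$ (or at worst $O(n\log n /\log\log n)$), while the OR complexity is large, $C_\vee(A) \in \Omega(n^2/\log^2 n)$; the ratio then gives the claimed bound on $\lambda(n)$. For the lower bound on $C_\vee(A)$ I would use Lemma~\ref{freelemma}: it suffices to construct $A$ that is simultaneously \emph{dense} (with $|A| \in \Omega(n^2)$, or at least $\Omega(n^2/\log n)$) and \emph{$k$-free} for $k \in O(\log n)$, i.e., containing no all-ones $(k{+}1)\times(k{+}1)$ submatrix. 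A natural candidate is a matrix whose rows and columns are indexed by the elements of $\gf^{\ell}$ with $n = 2^{\ell}$, and whose $(u,v)$ entry is determined by whether $\langle u, v\rangle$ (or some low-degree/bilinear form in $u,v$) lies in a prescribed set; the bilinear-form structure is exactly what will make the matrix cheap for an XOR circuit, since $A\mathbf{x}$ then decomposes along the recursive structure of $\gf^\ell$ the way a Hadamard-type transform does, giving $C_\oplus(A) \in O(n)$.

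The heart of the argument — and the part advertised in the introduction as using communication complexity ``in a novel way'' — is establishing $k$-freeness for $k \in O(\log n)$. I would phrase the existence of a large all-ones submatrix as the statement that there are sets $S, T$ of rows and columns of size $k{+}1$ with $A[u,v] = 1$ for all $(u,v) \in S \times T$, and then argue that such a combinatorial rectangle cannot be too large. The idea is to interpret the rows/columns as messages in a two-party protocol and the all-ones submatrix as a large monochromatic rectangle in the communication matrix of an associated function; a lower bound on the communication complexity (equivalently, an upper bound on monochromatic rectangle size, via a discrepancy or a distributional argument) then forces $k{+}1 \le O(\log n)$. Concretely I expect to show that, conditioned on the event ``$(u,v)$ lies in a fixed combinatorial rectangle,'' the induced distribution on the value $\langle u,v\rangle$ (or the relevant form) is close to uniform — so a genuinely large rectangle could not be all-ones — and the tool that quantifies this closeness is a communication-complexity / discrepancy estimate for inner product or a related function.

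The main obstacle will be balancing the two requirements against each other: making $A$ structured enough (bilinear, recursively decomposable) to guarantee $C_\oplus(A) \in O(n)$, while simultaneously making it ``pseudorandom'' enough that no $\omega(\log n) \times \omega(\log n)$ all-ones submatrix survives. Pure inner-product-type matrices are the classic example with small discrepancy, but one must check that the chosen entrywise rule both keeps density $\Omega(n^2/\log n)$ (so that Lemma~\ref{freelemma} yields $\Omega(n^2/\log^2 n)$) and still admits the linear-size XOR circuit; if a single bilinear form does not simultaneously achieve all three, I would take a direct sum / tensor combination of several such forms, or pad with a sparse ``correction'' layer, accepting a slightly worse XOR bound like $O(n\log n/\log\log n)$ via Theorem~\ref{upperboundtranspo}, which is still enough for the $\Omega(n/\log^2 n)$ separation. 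Once density, $k$-freeness, and the XOR upper bound are in hand, the theorem follows by dividing the two bounds.
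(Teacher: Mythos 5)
There is a genuine gap, and it sits exactly where you flag your ``main obstacle.'' Your plan needs an \emph{explicit} matrix that is simultaneously dense, $O(\log n)$-free, and computable by a linear-size XOR circuit, and your proposed candidates --- bilinear/inner-product-type matrices over $\gf^{\ell}$ --- provably fail the freeness requirement. If $A[u,v]=\langle u,v\rangle$ over $\gf$, take an isotropic subspace $V\subseteq V^{\perp}$ of dimension $\ell/2$ and an appropriate coset: this yields a $\sqrt{n}\times\sqrt{n}$ monochromatic (all-ones or all-zeros) submatrix, so $A$ is not even $n^{0.49}$-free and Lemma~\ref{freelemma} gives only the trivial bound $\Omega(n)$. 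Discrepancy bounds for IP cannot rescue this: discrepancy $2^{-\ell/2}$ only forbids monochromatic rectangles of area exceeding roughly $n^{3/2}$, which is far too weak to force $k\in O(\log n)$. More fundamentally, an explicit dense matrix with no $2\log n\times 2\log n$ all-ones submatrix is essentially an explicit bipartite Ramsey graph with logarithmic clique number --- a notoriously hard problem on its own, before one even asks for a linear-size XOR circuit on top of it. Your proposal does not contain an idea that overcomes this.

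The paper sidesteps the explicitness issue entirely: the theorem only asserts existence (a maximum over $A$), so it takes $A=BC$ for \emph{uniformly random} $B\in\gf^{n\times 14\log n}$ and $C\in\gf^{14\log n\times n}$. The low rank over $\gf$ immediately gives $C_{\oplus}(A)\in O(n)$ by applying Theorem~\ref{upperboundtranspo} to each factor and composing (the composition cancels heavily, which is the whole point). Communication complexity then enters in a different role from the one you envisage: rather than bounding rectangle sizes in a fixed explicit matrix, the paper proves (Lemma~\ref{dependencylemma}) that for any fixed $2\log n\times 2\log n$ submatrix of $BC$, each entry conditioned on all the others is within $\pm 1/(2\log n)$ of uniform --- otherwise one could build a randomized protocol for inner product beating the Chor--Goldreich distributional lower bound. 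Near-conditional-uniformity makes a union bound over all $\binom{n}{2\log n}^{2}$ submatrices go through, giving $2\log n$-freeness of both $A$ and $\bar{A}$ w.h.p., and the K\H{o}v\'{a}ri--S\'{o}s--Tur\'{a}n bound applied to $\bar{A}$ then gives $|A|\in\Omega(n^{2})$. If you want to salvage your outline, the minimal fix is to abandon explicitness and randomize the bilinear form, i.e., replace your fixed form by a random low-rank factorization --- at which point you have reconstructed the paper's proof.
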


The proof uses the probabilistic method. We construct randomly
two matrices, and let $A$ be their product.
In order to use Lemma \ref{freelemma} on $A$,
we need to show that with high probability, the matrix $A$
will be $2\log n$-free. We do this via Lemma \ref{dependencylemma}
by showing that the marginal distribution of any entry in a fixed $2\log n\times
2\log n$ submatrix is almost uniformly random.

In the following, for a matrix $M$, we let $M_i$ ($M^i$) 
denote its $i$th row (column). And for $I\subseteq [n]$, we let
$M_I$ ($M^I$) denote the submatrix consisting of the rows (columns)
with indices in $I$.

Lemma \ref{dependencylemma} might seem somewhat technical.
However, there is a very simple intuition behind it:
Suppose $M$ is obtained at random as in the statement of the
lemma. Informally we want to say that the entries
do not ``depend'' too much on each other.
More formally we want to show that given 
all but one entry in $M$ it is not possible to guess the last
entry with significant advantage over random guessing.
The proof idea is to transform
any good guess into a deterministic communication protocol for
computation of the inner product, and to use a well known
limitation on how well this can be done
\cite{DBLP:journals/siamcomp/ChorG88,DBLP:books/daglib/0011756}.

We will say that two (partially) defined matrices 
are \emph{consistent} 
if they agree on all their defined entries.

\begin{lemma}
\label{dependencylemma}
Let $M$ be an $m\times m$ partially defined matrix, where
all entries except $M_p^q$ are defined.
Let $B,C$ be matrices over $\gf$ with dimensions
$m\times 7m$ and $7m\times m$ respectively, be uniformly random
among all possible pairs $(B,C)$ such that $BC$ is consistent with
$M$. 

Then for sufficiently large $m$, the conditional probability that $M_{p}^q$ is
$1$, given all other entries, is contained in the interval
$(\frac{1}{2}-\frac{1}{m},\frac{1}{2}+\frac{1}{m})$,
where the probability is over the choices of $B$ and $C$.
\end{lemma}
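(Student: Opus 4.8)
First observe that the conditioning in the statement is not an extra event imposed on top of the distribution: ``$(B,C)$ uniform among pairs with $BC$ consistent with $M$'' already means $(BC)_i^j=M_i^j$ for all $(i,j)\neq(p,q)$, so the quantity to estimate is simply $\Pr[(BC)_p^q=1]$ under that distribution. I would analyze it by conditioning on the \emph{context} $\mathcal{E}$ consisting of all rows of $B$ other than $B_p$ together with all columns of $C$ other than $C^q$. Writing $(BC)_i^j=\langle B_i,C^j\rangle$ (inner product over $\gf$), the defining constraints split cleanly: those with $i\neq p$ and $j\neq q$ depend only on $\mathcal{E}$; the constraints $\langle B_p,C^j\rangle=M_p^j$ ($j\neq q$) involve only $B_p$ and the now-fixed columns $C^j$; and the constraints $\langle B_i,C^q\rangle=M_i^q$ ($i\neq p$) involve only $C^q$ and the now-fixed rows $B_i$. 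Consequently, conditioned on any $\mathcal{E}$ that admits at least one valid completion, $B_p$ and $C^q$ are \emph{independent}, each uniform over an affine subspace of $\gf^{7m}$, say $B_p\sim\mathrm{Unif}(u_B+W_B)$ and $C^q\sim\mathrm{Unif}(u_C+W_C)$. Contexts with no valid completion receive probability $0$ and can be discarded, and valid pairs exist for every $M$ (take $B=[\,I_m\mid 0\,]$ and let $C$ be the $7m\times m$ matrix whose top $m$ rows are an arbitrary completion $\widetilde M$ of $M$ and whose remaining rows vanish, so that $BC=\widetilde M$).

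Next I would bound dimensions. Each of $W_B,W_C$ is the solution space of at most $m-1$ homogeneous linear equations in $7m$ unknowns, hence $\dim W_B,\dim W_C\ge 7m-(m-1)=6m+1$, and dually $\dim W_B^{\perp},\dim W_C^{\perp}\le m-1$. It then remains to show that, for every feasible context, $\langle B_p,C^q\rangle$ is almost unbiased. Passing to characters,
\[
\Pr\big[\,(BC)_p^q=1\,\big|\,\mathcal{E}\,\big]=\tfrac12-\tfrac12\,\mathbb{E}\big[(-1)^{\langle B_p,C^q\rangle}\big],
\]
and averaging $(-1)^{\langle u_B+w_B,\,u_C+w_C\rangle}$ first over $w_B\in W_B$ yields the indicator of $u_C+w_C\in W_B^{\perp}$; so the magnitude of the character average is at most $\Pr_{w_C\in W_C}\big[u_C+w_C\in W_B^{\perp}\big]$. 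Since $(u_C+W_C)\cap W_B^{\perp}$ is empty or a coset of $W_C\cap W_B^{\perp}$, a space of size at most $2^{m-1}$, while $|u_C+W_C|\ge 2^{6m+1}$, that probability is at most $2^{(m-1)-(6m+1)}=2^{-5m-2}$. Hence $\Pr[(BC)_p^q=1\mid\mathcal{E}]$ lies within $2^{-5m-3}$ of $\tfrac12$ for \emph{every} feasible $\mathcal{E}$, and averaging over $\mathcal{E}$ gives the claim with enormous slack (indeed $2^{-5m-3}<1/m$ for all $m\ge 1$).

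I would remark that this final estimate is precisely the fact --- central to the authors' stated approach via communication complexity --- that the inner-product function has exponentially small \emph{discrepancy} on large combinatorial rectangles, i.e.\ no short protocol predicts $\langle x,y\rangle$ from side information appreciably better than a fair coin; one could replace the direct Fourier computation above by the rectangle bound $\big|\sum_{x\in S,\,y\in T}(-1)^{\langle x,y\rangle}\big|\le\sqrt{|S|\,|T|\,2^{7m}}$ applied to $S=u_B+W_B$, $T=u_C+W_C$. The one point that genuinely needs care is the reduction step: one must check that weighting each feasible context $\mathcal{E}$ by its number of completions $|W_B|\cdot|W_C|$ does not disturb anything --- it does not, since the conditional-bias bound is uniform over all feasible contexts --- and that infeasible contexts truly carry zero mass. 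I expect this bookkeeping, not any single inequality, to be the only real obstacle; the conceptual core is the decoupling of $B_p$ from $C^q$ once the context is fixed, combined with the classical fact that the inner product is ``spread out''.
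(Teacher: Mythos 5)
Your proof is correct, and it takes a genuinely different route from the paper's. The paper argues by contradiction: a conditional bias of $\frac1m$ on the missing entry is converted into a randomized one-round protocol (Alice substitutes her input for $B_p$, Bob for $C^q$) that computes the inner product of two random $7m$-bit vectors with $m$ bits of communication and advantage $\frac{2^{-2m}}{4m}$; after derandomizing by averaging, this contradicts the Chor--Goldreich lower bound $\frac{7m}{2}-\log(1/p)>m$. You instead argue directly: conditioning on all of $B$ except row $p$ and all of $C$ except column $q$ decouples $B_p$ and $C^q$ into independent uniform points of affine subspaces $u_B+W_B$, $u_C+W_C$ with $\dim W_B,\dim W_C\ge 6m+1$, and the character-sum computation bounds the bias of $\langle B_p,C^q\rangle$ by $\lvert W_C\cap W_B^{\perp}\rvert/\lvert W_C\rvert\le 2^{-5m-2}$ uniformly over every feasible context. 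Your decoupling observation, the verification that feasible pairs exist, and the remark that the context-weighting by $\lvert W_B\rvert\cdot\lvert W_C\rvert$ is harmless because the bound is uniform, are all sound. What each approach buys: the paper's reduction is the ``novel use of communication complexity'' it advertises and lets one cite Chor--Goldreich as a black box; your argument essentially inlines the proof of that black box (the discrepancy of inner product on rectangles, here specialized to cosets of subspaces), is fully self-contained and more elementary --- in the spirit of the paper's own Remark~1 via Sylvester's rank inequality --- and is quantitatively far stronger, giving bias $2^{-\Omega(m)}$ rather than $\frac1m$, which would only make the union bound in the proof of Theorem~\ref{separationthm} easier. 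One could quibble that the lemma as stated asks for the probability ``given all other entries'' on top of a distribution already conditioned on consistency with $M$; your reading (that this is just $\Pr[(BC)_p^q=1]$ under that distribution) is the natural one and is how the lemma is actually used.
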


Before proving the lemma, we will first recall a fact from communication
complexity, due to Chor and Goldreich \cite{DBLP:conf/focs/ChorG85},
see also \cite{DBLP:books/daglib/0011756}.

\begin{theorem}[Chor, Goldreich]
\label{innerproductlb}
  Let $\mathbf{x}$ and $\mathbf{y}$ be independent and
uniformly random vectors, each
of $n$ bits. Suppose a deterministic communication protocol 
is used to compute the inner product of $\mathbf{x}$ and $\mathbf{y}$,
and the protocol is correct with
probability at least $\frac{1}{2}+p$.
Then on some inputs, the protocol uses $\frac{n}{2}-\log( 1/p )$ 
bits of communication. 
\end{theorem}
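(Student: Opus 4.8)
The plan is to prove this via the \emph{discrepancy method}. Write $N = 2^n$ and encode the target as the sign matrix $H \in \{+1,-1\}^{N\times N}$ indexed by $\mathbf{x},\mathbf{y}\in \gf^n$, with $H[\mathbf{x},\mathbf{y}] = (-1)^{\langle \mathbf{x},\mathbf{y}\rangle}$, where $\langle \cdot,\cdot\rangle$ denotes the inner product over $\gf$. This $H$ is exactly a $\pm 1$ Sylvester--Hadamard matrix, and the whole argument rests on one algebraic property: its rows are the characters of $\gf^n$, so they are pairwise orthogonal, $H^{\mathsf T}H = N\,I$, and its operator norm is precisely $\|H\| = \sqrt{N} = 2^{n/2}$. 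Suppose a deterministic protocol exchanges at most $c$ bits on every input and computes the inner product correctly with probability at least $\tfrac12 + p$ over uniform $(\mathbf{x},\mathbf{y})$; I want to conclude $c \ge \tfrac{n}{2} - \log(1/p)$.

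First I would invoke the standard structural fact that such a protocol partitions the input set $\gf^n\times \gf^n$ into a family of at most $2^c$ combinatorial rectangles $R_1,\dots,R_t$, with $R_i = S_i\times T_i$, on each of which the transcript -- and hence the single output bit -- is constant; recoding $0/1$ as $\pm 1$, call that output $h_i\in\{+1,-1\}$. The advantage of the protocol over random guessing is the correlation $\mathbb{E}_{\mathbf{x},\mathbf{y}}[\,g\,h\,]$, where $g = (-1)^{\langle\mathbf{x},\mathbf{y}\rangle}$ is the true answer and $h$ is the protocol's $\pm 1$ output; correctness $\tfrac12+p$ translates literally into $\mathbb{E}[gh]\ge 2p$. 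Since $h$ is constant on each $R_i$, this correlation decomposes over rectangles:
\[
2p \;\le\; \Bigl|\,\mathbb{E}_{\mathbf{x},\mathbf{y}}[g\,h]\,\Bigr|
\;=\; \frac{1}{N^2}\Bigl|\sum_{i=1}^{t} h_i\,\mathbf{1}_{S_i}^{\mathsf T} H\,\mathbf{1}_{T_i}\Bigr|
\;\le\; \frac{1}{N^2}\sum_{i=1}^{t}\bigl|\mathbf{1}_{S_i}^{\mathsf T} H\,\mathbf{1}_{T_i}\bigr|.
\]

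The heart of the proof -- and the step I expect to be the main obstacle -- is bounding each rectangle term $|\mathbf{1}_{S_i}^{\mathsf T} H\,\mathbf{1}_{T_i}|$, i.e.\ the \emph{discrepancy} of $H$ on $R_i$. Here the orthogonality of $H$ does all the work: by Cauchy--Schwarz together with the operator-norm bound,
\[
\bigl|\mathbf{1}_{S_i}^{\mathsf T} H\,\mathbf{1}_{T_i}\bigr|
\;\le\; \|H\|\,\|\mathbf{1}_{S_i}\|_2\,\|\mathbf{1}_{T_i}\|_2
\;=\; \sqrt{N}\,\sqrt{|S_i|}\,\sqrt{|T_i|}
\;\le\; N^{3/2},
\]
using $|S_i|,|T_i|\le N$. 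Substituting this uniform estimate into the previous display and using $t \le 2^c$ gives $2p \le 2^c\cdot N^{-1/2} = 2^{\,c-n/2}$, whence $c \ge \tfrac{n}{2} + 1 - \log(1/p)$, which is even slightly stronger than the claimed $\tfrac{n}{2} - \log(1/p)$. The delicate points to nail down are that the rectangle partition genuinely has at most $2^c$ cells when the worst-case communication is $c$, and that the $\pm1$ recoding makes ``advantage'' equal to the correlation $\mathbb{E}[gh]$; once the spectral estimate $\|H\|=2^{n/2}$ is established from the character-sum identity $H^{\mathsf T}H = N I$, the remainder is bookkeeping.
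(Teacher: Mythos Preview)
The paper does not actually prove this theorem: it is stated as a known fact from communication complexity, attributed to Chor and Goldreich with a pointer to \cite{DBLP:conf/focs/ChorG85} and the textbook \cite{DBLP:books/daglib/0011756}, and then used as a black box in the proof of Lemma~\ref{dependencylemma}. So there is no ``paper's own proof'' to compare against.

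Your argument is correct and is precisely the standard discrepancy proof found in those references. The key ingredients --- the rectangle partition of size at most $2^c$, the translation of success probability $\tfrac12+p$ into correlation $2p$, and the spectral bound $\lvert \mathbf{1}_S^{\mathsf T} H\,\mathbf{1}_T\rvert \le \|H\|\sqrt{|S|\,|T|}$ with $\|H\|=2^{n/2}$ coming from $H^{\mathsf T}H=NI$ --- are all handled properly. One small remark: your final step uses the crude estimate $\sqrt{|S_i|\,|T_i|}\le N$ on each rectangle separately; if instead you exploit that the rectangles partition $\gf^n\times\gf^n$ and apply Cauchy--Schwarz to $\sum_i\sqrt{|R_i|}$, you recover the sharper bound $c\ge n - 2\log(1/p)+2$. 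But the weaker bound you derive already matches (indeed slightly exceeds) the statement as quoted in the paper, so nothing is missing.
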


\begin{proof}[of Lemma \ref{dependencylemma}]
Suppose for the sake of contradiction 
that there exists a partially defined matrix $M$,
such that when all entries
but one are revealed, the conditional probability of the last entry
being $a$ is at least $\frac{1}{2}+\frac{1}{m}$ for some
$a\in \{0,1\}$.

Assuming this, we will first present a randomized communication protocol
computing the inner product of two independent and uniformly random $7m$ bit
vectors $\mathbf{x}$ and $\mathbf{y}$
that always uses $m$ bits of communication and
is correct with probability at least
$\frac{1}{2}+\frac{2^{-2m}}{4m}$.
We will then argue that this
protocol can be completely derandomized.
This results in a
deterministic communication protocol that
violates Theorem \ref{innerproductlb}.
From this we conclude that such a partially defined matrix, 
with this large probability of the last entry being $a$, does not exist.

Let Alice and Bob have as input vectors $\mathbf{x}$ and $\mathbf{y}$,
respectively, each
of length $7m$. 
Before getting their inputs, they use their shared random bits to
agree on a random choice of the two matrices $B$ and $C$ distributed
as stated in the Lemma.
To compute the inner product of $\mathbf{x}$ and $\mathbf{y}$,
Alice replaces the row $B_p$ with
$\mathbf{x}$ and Bob replaces the column $C^q$ with $\mathbf{y}$,
let the resulting matrices be $B'$ and $C'$. Let $M'=B'C'$.
Notice that $M$ and $M'$ are consistent, except possibly on row $p$ and column $q$.
Alice can compute the entire $p$th row of $M'$ (except $(M')_p^q$).
Similarly Bob can compute the entire
$q$th column (except $(M')_p^q$).
 The communication in the protocol consists of first letting Alice send 
the $m-1$ bits in the part
 of the $p$th row she can compute to Bob.  
 Bob now knows all the entries in $M'$, except the entry $M_p^q$.

In order for $M'$ and $M$ to be consistent, it is only necessary that
the $m-1$ defined entries in row $p$ and the $m-1$ defined entries
in column $q$ are equal in the two matrices, since $B'$ and $C'$ were
defined such that all other entries were equal.
This occurs with probability at least
$2^{-2m-2}$.

 In this case, the value
Alice and Bob want to compute is exactly the only unknown entry $M'^q_p$.
By assumption, this last entry is $a$ with probability at least
 $\frac{1}{2}+\frac{1}{m}$, so Bob outputs $a$.
If the known entries in $M'$ are not consistent with the known entries
in $M$, Bob outputs a uniformly random bit.
This is correct with probability $\frac{1}{2}$.
Thus, the probability of this protocol being correct is at least:
\begin{eqnarray*}
&&
2^{-2m-2}\left(\frac{1}{2}+\frac{1}{m}\right)
+
(1-2^{-2m-2})\frac{1}{2}\\
&=&
1/2+\frac{2^{-2m}}{4m}
\end{eqnarray*}

So when the inputs are uniformly distributed, the randomized protocol
computes the inner product of two $7m$ bits vectors with $m$ bits
communication, and it is correct with probability at least
$\frac{1}{2}+\frac{2^{-2m}}{4m}$.
By an  averaging argument it follows that there exist a deterministic
communication protocol with the same success probability.
According to Theorem \ref{innerproductlb}, any deterministic algorithm
for computing the inner product with this success probability
must communicate at least
\[
\frac{7m}{2}-\log(1/p)=\frac{7}{2}m-\log \left( \frac{4m}{2^{-2m}} \right)
=
\frac{3}{2}m -\log m - 2
\] 
Which is larger than $m$ for sufficiently large values of $m$ ($m\geq 16$ suffices), and we arrive at the desired
contradiction.
\qed
\end{proof}

We now use this to prove Theorem \ref{separationthm}.
We will use
following result on the ``Zarankiewicz problem''
\cite{kovari1954problem}, see also \cite{juknacombinatorics}. 
\begin{theorem}[Kov{\'a}ri, S{\'o}s, Tur{\'a}n]
\label{kovarisosturan}
Let $M$ be an $(a-1)$-free $n\times n$ matrix. Then the number of ones in $M$ is 
at most $(a-1)^{1/a}n^{2-1/a}+(a-1)n$.
\end{theorem}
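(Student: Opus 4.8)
The plan is to prove the bound by the classical double-counting argument on $a$-subsets of columns, combined with convexity. Write $|M| = z$ for the number of ones, and let $d_i$ denote the number of ones in the $i$th row, so that $z = \sum_{i=1}^{n} d_i$. For each row $i$ I would count the number of $a$-element subsets of columns on which that row is entirely $1$; this is exactly $\binom{d_i}{a}$. Summing over all rows, the quantity $\sum_{i=1}^{n}\binom{d_i}{a}$ counts, with multiplicity, the pairs consisting of a row together with an $a$-subset $S$ of columns such that the entries of that row in the columns of $S$ are all $1$.

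Next I would bound the same quantity from above using $(a-1)$-freeness. By the definition of $k$-freeness with $k=a-1$, the matrix $M$ contains no all-ones $a\times a$ submatrix, so for any fixed $a$-subset $S$ of columns the number of rows that are all $1$ on $S$ is at most $a-1$: if there were $a$ such rows, they would form, together with the columns of $S$, a forbidden all-ones $a\times a$ submatrix. Counting the pairs by the column-set $S$ first therefore yields
\[
\sum_{i=1}^{n}\binom{d_i}{a} \;=\; \sum_{S}\bigl(\text{number of rows that are all-ones on } S\bigr) \;\le\; (a-1)\binom{n}{a}.
\]

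The third step converts the left-hand side into a bound on the average degree. The function $x\mapsto\binom{x}{a}$, extended to the reals as $x(x-1)\cdots(x-a+1)/a!$, is convex on $[a-1,\infty)$, so Jensen's inequality gives $\sum_i\binom{d_i}{a}\ge n\binom{z/n}{a}$ once the average degree $z/n$ is at least $a-1$. If instead $z/n<a-1$, then $z<(a-1)n$ and the claimed bound holds trivially, so I may assume $z/n\ge a-1$. Combining with the previous display, and using $\binom{n}{a}\le n^a/a!$ together with the factorwise estimate $\binom{z/n}{a}\ge (z/n-a+1)^a/a!$, I obtain $(z/n-a+1)^a\le (a-1)n^{a-1}$. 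Taking $a$-th roots gives $z/n-a+1\le (a-1)^{1/a}n^{1-1/a}$, and multiplying through by $n$ rearranges to $z\le (a-1)^{1/a}n^{2-1/a}+(a-1)n$, as required.

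The main obstacle is making the convexity step fully rigorous: the polynomial extension of $\binom{x}{a}$ is \emph{not} convex near the origin, and individual rows may have $d_i<a-1$, so Jensen cannot be applied blindly to the real extension. I would handle this by working with $\binom{x}{a}$ as a function on the nonnegative integers, where the discrete second difference equals $\binom{x}{a-2}\ge 0$, so the sequence is convex and its piecewise-linear interpolation $\tilde f$ is a convex function on $[0,\infty)$ agreeing with the polynomial at integers; since a chord of a convex function lies above it, $\tilde f$ dominates the polynomial on $[a-1,\infty)$, which legitimizes the lower bound $\sum_i\binom{d_i}{a}\ge n\binom{z/n}{a}$ once $z/n\ge a-1$. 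With the trivial case isolated as above, the remaining steps are elementary manipulations of the factorial expressions.
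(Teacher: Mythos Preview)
The paper does not prove this theorem at all: it is stated as a cited result on the Zarankiewicz problem (with references to K\H{o}v\'ari--S\'os--Tur\'an and to Jukna's combinatorics book) and is used as a black box in the proof of Theorem~\ref{separationthm}. There is therefore nothing in the paper to compare against.

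For what it is worth, your argument is the classical double-counting proof and is correct. The only delicate point is the one you flag yourself---the convexity of the polynomial extension of $\binom{x}{a}$---and your fix via the piecewise-linear interpolation $\tilde f$ of the integer values is sound: the discrete second difference $\binom{x}{a-2}\ge 0$ makes $\tilde f$ convex on $[0,\infty)$, Jensen applied to $\tilde f$ gives $\sum_i \binom{d_i}{a}=\sum_i \tilde f(d_i)\ge n\,\tilde f(z/n)$, and since the polynomial $\binom{x}{a}$ is convex on $[a-1,\infty)$ (all roots of its second derivative lie strictly in $(0,a-1)$ by Rolle), the chord $\tilde f$ dominates it there, yielding $\tilde f(z/n)\ge\binom{z/n}{a}$ when $z/n\ge a-1$. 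The remaining factorwise estimates $\binom{n}{a}\le n^{a}/a!$ and $\binom{z/n}{a}\ge (z/n-a+1)^{a}/a!$ and the case split on $z/n<a-1$ are routine.
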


\begin{proof}[of Theorem \ref{separationthm}]
We will probabilistically construct two matrices
$B,C$ of dimensions $n\times 14\log n$,
$14\log n\times n$.
Each entry in $B$ and $C$ will be
chosen independently and uniformly at random on $\gf$.
We let $A=BC$. 
First notice that it
 follows directly from Theorem \ref{upperboundtranspo} that
$B$ and $C$ can be  computed with XOR circuits,
both of size $O(n)$.
 Now we can let the outputs of the circuit computing $C$ be the inputs of
 the circuit computing $B$. Notice that this composed circuit will have many cancellations.
The resulting circuit computes the matrix $A$ and
 has size $O(n)$. We will argue that with probability $1-o(1)$
this matrix will not have a $2\log n\times 2\log n$ submatrix of all ones,
while $|A|\in \Omega(n^2)$.
By Lemma \ref{freelemma} the results follows.

We show that for large enough $n$,
with high probability neither of the following
two events will happen:

\begin{enumerate}
\item \label{all1submatrix}$BC$ has a submatrix of dimension
$2\log n\times 2\log n$ consisting of all ones or all zeros 
\item \label{matrixnorm} $|BC| \leq 0.3n^2$
\end{enumerate}

\paragraph{\ref{all1submatrix})}
Fix a submatrix $M$ of $BC$ with dimensions
$2\log n\times 2\log n$. That is, some subset $I$ of the
rows of $B$, and a subset $J$ of the columns in $C$ so
$M=B_IC^J$. We now want to show that the probability of this matrix
having only ones (or zeros) is so small that a union bound over all
choices of $2\log n\times 2\log n$ submatrices gives that the probability
that there exists such a submatrix goes to $0$.
Notice that this would be easy if all the entries in $M$ were
mutually independent and 
uniformly distributed.

Although this is not case, 
Lemma \ref{dependencylemma} for $m=2\log n$ states,
that this is almost the case. More precisely,
the conditional probability that a given entry is $1$ (or $0$)
is at most 
$\frac{1}{2}+\frac{1}{2\log n}$.
We can now use the union bound to estimate the probability that $A$ has
 a submatrix of dimension $2\log n\times 2\log n$ with all the
entries being either $0$ or $1$:

\begin{eqnarray*}
2{n\choose 2\log n}^2
\left(
\frac{1}{2}+\frac{1}{2\log n}
\right)^{4\log^2n}
&\leq&
2\frac{n^{4\log n}}{(2\log n)!}
\left(
\frac{1+\frac{1}{\log n}}{2}
\right)^{4\log^2n}\\
&\leq&
2\left(\frac{\left(1+\frac{1}{\log n} \right)^{4\log^2 n}}{
(2\log n)!}\right)
\end{eqnarray*}
This tends to $0$, so we arrive at the desired result.

\paragraph{\ref{matrixnorm})}
Note that if one wants to show that with positive probability
the number of ones is $\Omega(n^2)$, a straightforward application of
Markov's inequality suffices.  Here we will show the stronger statement
that with probability $1-o(1)$, the number of ones is
at least $\frac{n^2}{2}-o(n^2)$. By the proof above, we may
assume that the Boolean complement of $A$, $\bar{A}$, does
not have a $2\log n$ submatrix of all ones.
By Theorem \ref{kovarisosturan}, the number of ones in $\bar{A}$
is at most
\[
(2\log n-1)^{1/2\log n}n^{2-1/2\log n}+(2\log n-1)n
\]
One can verify that
\[
\lim_{n\rightarrow \infty}\frac{(2\log n-1)^{1/2\log n}n^{2-1/2\log n}+(2\log n-1)n}{n^2}
=\frac{1}{\sqrt{2}}
\]
So if there is not a $2\log n\times 2\log n$
matrix of all zeros in $A$, the number of zeros in $A$ is at 
most $n^2(1-\frac{1}{\sqrt{2}})<0.3n^2$. Hence the probability of $|A|$ being
less than $0.3{n^2}$ tends to $0$.

\qed
\end{proof}

\emph{Remark 1:}
It has been pointed out by Avishay Tal that in order to show that
the matrix is $O(\log n)$-free, a significantly simpler argument
suffices. We present it here: Let $B,C$ be random matrices as in the
construction of Theorem \ref{separationthm} but with
dimensions $n\times 5\log n$ and $5\log n \times n$, respectively, and
let $A=BC$. 
Now any $5\log n\times 5\log n$ submatrix of $A$ is a product of two
$5\log n\times 5\log n$ dimensional matrices, one being a submatrix
of $B$  and one being a submatrix of $C$. Now recall the
theorem from linear algebra:

\begin{theorem}[Sylvester's Rank Inequality]
  For two $m\times m$ matrices $B,C$
\[
rank(BC) \geq rank(B)+rank(C)-m
\]
\end{theorem}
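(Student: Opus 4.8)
The plan is to derive Sylvester's inequality from the rank–nullity theorem, viewing $B$ and $C$ as linear maps $\mathbb{F}^m\to\mathbb{F}^m$. \textbf{Step 1:} rewrite the image. Since $(BC)x=B(Cx)$ for every $x$, we have $\mathrm{im}(BC)=B(W)$ where $W:=\mathrm{im}(C)$ is a subspace of $\mathbb{F}^m$ of dimension $\mathrm{rank}(C)$; hence $\mathrm{rank}(BC)=\dim B(W)$. \textbf{Step 2:} apply rank–nullity to the restriction $B|_W\colon W\to\mathbb{F}^m$, obtaining $\dim B(W)=\dim W-\dim(\ker B\cap W)$. \textbf{Step 3:} bound the intersection crudely: $\ker B\cap W\subseteq\ker B$, so $\dim(\ker B\cap W)\le\dim\ker B=m-\mathrm{rank}(B)$. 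Chaining the three steps gives $\mathrm{rank}(BC)\ge\mathrm{rank}(C)-(m-\mathrm{rank}(B))=\mathrm{rank}(B)+\mathrm{rank}(C)-m$.

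A cleaner-to-write alternative, which I might prefer in the final text, is to prove the equivalent ``kernel'' form first: show $\dim\ker(BC)\le\dim\ker(B)+\dim\ker(C)$, then substitute $\dim\ker(X)=m-\mathrm{rank}(X)$ for $X\in\{B,C,BC\}$ and rearrange. For the kernel bound, observe $\ker(C)\subseteq\ker(BC)$, and consider the map $\varphi\colon\ker(BC)\to\ker(B)$, $\varphi(x)=Cx$, which is well defined because $B(Cx)=0$; its kernel is exactly $\ker(BC)\cap\ker(C)=\ker(C)$, so rank–nullity applied to $\varphi$ gives $\dim\ker(BC)=\dim\ker(C)+\dim\mathrm{im}(\varphi)\le\dim\ker(C)+\dim\ker(B)$.

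I do not anticipate a genuine obstacle: the only points that need a word of justification are the identities $\mathrm{im}(BC)=B(\mathrm{im}(C))$ and $\ker(C)\subseteq\ker(BC)$, both immediate from $(BC)x=B(Cx)$, and the (standard) fact that rank–nullity holds over an arbitrary field. The latter is what matters for the application here, since the matrices arising just after Remark 1 are over $\mathbb{F}_2$; the same proof also yields $\mathrm{rank}(BC)\ge\mathrm{rank}(B)+\mathrm{rank}(C)-k$ for any $B$ of size $p\times k$ and $C$ of size $k\times q$, though only the square case $m=5\log n$ is used.
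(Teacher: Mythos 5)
Your proof is correct: both the image-restriction argument and the kernel-form alternative are standard, complete derivations of Sylvester's rank inequality, and you rightly note that rank--nullity over an arbitrary field (in particular $\mathbb{F}_2$) is all that is needed for the application following Remark~1. There is nothing to compare against, however, since the paper states this as a known theorem from linear algebra and gives no proof of it; either of your two write-ups would serve as a self-contained justification, with the kernel version being the more economical.
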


The probability that a random $k\times k$ matrix has rank less
than $d$ is at most $2^{k-(k-d)^2}$ (see
e.g. the proof of Lemma 5.4 in \cite{DBLP:conf/focs/KomargodskiRT13}).
Now a union bound shows that the probability that there is
a $5\log n\times 5\log n$ submatrix of $B$ or $C$ with
rank smaller than $0.51 \cdot 5\log n$ tends to $0$.
So for large enough $n$, with high probability, every
$5\log n\times 5\log n$ of $A$ will have rank at least
$0.02\cdot 5\log n$.
A submatrix consisting of all ones or all zeros has rank 0 or 1, which
is less than $0.1 \log n$ for large enough $n$. Thus, the
probability of this occurring tends to zero.


In the matrix constructed in \cite[Theorem 5.8]{juknasergeevSurvey}, they
highlight the property that the matrix is 
$t$\emph{-Ramsey}, meaning that both the matrix and its complement are $(t-1)$-free,
and it is a somewhat interesting fact that such
matrices admit small XOR circuits. It follows
immediately from the proof of Theorem \ref{separationthm} that this holds as well for
the matrix constructed, and we state this a separate corollary.

\begin{corollary}
For large enough $n$, 
  with high probability, the bipartite graph with adjacency matrix
$A$ from Theorem~\ref{separationthm} is $t$-Ramsey
for $t=2\log n$. 
\end{corollary}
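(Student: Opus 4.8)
The plan is to read off the corollary directly from the probabilistic estimates already established inside the proof of Theorem~\ref{separationthm}. Recall that a bipartite graph with adjacency matrix $A$ is $t$-Ramsey precisely when neither $A$ nor its complement $\bar A$ contains an all-ones submatrix of size $t\times t$; equivalently, $A$ contains no $t\times t$ submatrix that is all ones and no $t\times t$ submatrix that is all zeros. But this is exactly the content of event~\ref{all1submatrix} in the proof of Theorem~\ref{separationthm}: there we showed, via the union bound computation
\[
2{n\choose 2\log n}^2\left(\frac{1}{2}+\frac{1}{2\log n}\right)^{4\log^2 n}\longrightarrow 0,
\]
that with probability $1-o(1)$ the matrix $A=BC$ has \emph{no} $2\log n\times 2\log n$ submatrix that is constant (all zeros or all ones). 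Setting $t=2\log n$, this says precisely that with high probability both $A$ and $\bar A$ are $(t-1)$-free, i.e. $A$ is $t$-Ramsey.

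So the proof is essentially a one-line observation: the estimate for event~\ref{all1submatrix} was deliberately stated to cover both the all-ones and the all-zeros case (note the factor $2$ and the symmetric bound $\frac12+\frac{1}{2\log n}$ on the conditional probability of an entry being either value, which comes from Lemma~\ref{dependencylemma} being symmetric in $0$ and $1$). First I would state that being $t$-Ramsey for $t=2\log n$ is equivalent to the nonoccurrence of event~\ref{all1submatrix}. Then I would simply cite the displayed bound from that part of the proof, which tends to $0$, to conclude that the bad event fails with probability $1-o(1)$, hence $A$ is $t$-Ramsey with high probability. Alternatively, one can invoke Remark~1 (Avishay Tal's argument): every $5\log n\times 5\log n$ submatrix of the corresponding $A$ has rank $\Omega(\log n)$ with high probability, whereas a constant submatrix has rank at most $1$, so again neither $A$ nor $\bar A$ has a large constant submatrix.

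There is really no main obstacle here — the corollary is a repackaging of an estimate already proved, and the only thing to be careful about is making sure the symmetry between the all-ones and all-zeros cases is visibly present in the cited bound (it is, thanks to the leading factor of $2$ and the two-sided conclusion of Lemma~\ref{dependencylemma}). The one minor point worth spelling out is the dictionary between the matrix language of Theorem~\ref{separationthm} and the graph-theoretic language of $t$-Ramsey bipartite graphs, namely that an all-ones $t\times t$ submatrix of $A$ corresponds to a $K_{t,t}$ in the bipartite graph and an all-ones $t\times t$ submatrix of $\bar A$ corresponds to a $K_{t,t}$ in the bipartite complement; once this is said, the estimate from the proof of Theorem~\ref{separationthm} finishes the argument immediately.
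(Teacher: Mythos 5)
Your proposal is correct and matches the paper's intended argument exactly: the paper states that the corollary ``follows immediately from the proof of Theorem~\ref{separationthm},'' namely from the union-bound estimate showing that with probability $1-o(1)$ the matrix $A=BC$ has no constant $2\log n\times 2\log n$ submatrix, which is precisely the $(t-1)$-freeness of both $A$ and $\bar{A}$ for $t=2\log n$. Your translation between the matrix and graph formulations, and your note that the factor of $2$ and the two-sided bound from Lemma~\ref{dependencylemma} already cover both the all-ones and all-zeros cases, are exactly the right observations.
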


Notice that by Theorem \ref{upperboundtranspo},
the obtained separation is at most a factor of $O(\log n$) from being
optimal. 
Also, except for lower bounds
based on counting, all strong lower bounds we know of are essentially
based on Lemma \ref{freelemma}.
Following that line of thought,
one might hope to improve the separation above by coming
up with a better choice of $A$ that does not have a
$O(\log^{1-\epsilon}n)\times O(\log^{1-\epsilon}n)$ all $1$ submatrix to get
a stronger lower bound on $C_\vee(A)$, or perhaps hope that a tighter
analysis than the above would give a stronger separation.
However, this direction does not seem
promising. To see this, it follows from Theorem \ref{kovarisosturan}
that a matrix without a
$\log^{1-\epsilon} n\times \log^{1-\epsilon} n$ all $1$ submatrix, the
lower bound obtained using Lemma \ref{freelemma} would be
of order
$O\left(
\frac{n^{2-\frac{1}{\log^{1-\epsilon}n}}}{(\log^{1-\epsilon}n)^2}
\right)$, which
is $o\left(\frac{n^2}{\log^2n} \right)$.

\section{Smallest XOR Circuit Problem}

As mentioned earlier, the notion cancellation-free was introduced by Boyar
and Peralta in \cite{boyarcombinationalappear}.
The paper concerns shortest straight line programs for computing
linear forms, which is equivalent to the model studied in
this paper.
In \cite{DBLP:books/fm/GareyJ79}, it is shown that the
Ensemble Computation Problem (recall that this is equivalent to
cancellation-free)
is $\mathbf{NP}$-complete.
For general XOR circuits, the
problem remains $\mathbf{NP}$-complete \cite{boyarcombinationalappear}.
It was observed in \cite{boyarcombinationalappear}
that several researchers have used heuristics that will always
produce cancellation-free circuits,
see \cite{canright2005very,paar95,SatohMTM01}.
By definition, any heuristic which only produces cancellation-free circuits 
cannot achieve an
approximation ratio better than $\rho(n)$. By Proposition~\ref{orcancelremark},
$\rho(n)\geq \lambda(n)$. Thus, Theorem~\ref{separationthm}  implies that
techniques which only produce cancellation-free circuits are not
guaranteed to be very close to optimal.
\begin{corollary}
  The algorithms in \cite{canright2005very,paar95,SatohMTM01} do
not guarantee approximation ratios better than
$\Theta\left( \frac{n}{\log ^{2}n}\right)$.
\end{corollary}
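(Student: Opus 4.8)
The plan is to obtain this essentially for free from Theorem~\ref{separationthm} and Proposition~\ref{orcancelremark}, once one observes that each of the three heuristics only ever produces a cancellation-free circuit. The underlying generic statement is: if $H$ is any algorithm that, on every Boolean matrix $A$, outputs a cancellation-free XOR circuit computing $A$, then the size of $H$'s output is at least $C_{CF}(A)$, which by Proposition~\ref{orcancelremark} is at least $C_\vee(A)$. Comparing this to the true optimum $C_\oplus(A)$ and feeding in the separating family of Theorem~\ref{separationthm} will give a worst-case approximation ratio of $\Omega(n/\log^2 n)$, which is what ``do not guarantee an approximation ratio better than'' means here.

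First I would spell out the reduction on the witness family. Fix $n$, and let $A=BC$ be a concrete matrix from the (with-high-probability) construction in the proof of Theorem~\ref{separationthm}: it satisfies $C_\oplus(A)\in O(n)$ while being $(2\log n)$-free with $|A|\in\Omega(n^2)$, so Lemma~\ref{freelemma} gives $C_\vee(A)\in\Omega(n^2/\log^2 n)$. Since $H$ returns a cancellation-free circuit for $A$, that circuit has size at least $C_{CF}(A)\ge C_\vee(A)\in\Omega(n^2/\log^2 n)$, whereas an optimal XOR circuit has size $O(n)$; dividing, the ratio on this instance is $\Omega(n/\log^2 n)$. For the upper (``$\Theta$'') side, I would note that by Theorem~\ref{upperboundtranspo} (Lupanov) every matrix admits a cancellation-free circuit of size $O(n^2/\log n)$, and that the optimum for this particular family is in fact $\Theta(n)$ (a random $A=BC$ whose rows are pairwise distinct and non-unit forces at least $n$ gates), so the achievable ratio on the family is also $O(n/\log n)$; thus the guaranteed ratio of any cancellation-free heuristic is pinned to the window $[\Omega(n/\log^2 n),\,O(n/\log n)]$, which is the $\Theta(n/\log^2 n)$ asserted, up to the $\log n$ slack already present in the best known bounds on $\lambda(n)$.

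The hard part --- though only mildly so --- is the one step that is not formal manipulation: checking that the algorithms of~\cite{canright2005very,paar95,SatohMTM01} really do output only cancellation-free circuits. I would dispatch this by the observation already recorded in~\cite{boyarcombinationalappear}: each of these procedures builds the required linear forms by iteratively adding one gate that XORs two already-computed signals, always chosen so as to share structure with the forms still to be produced, and never introduces a signal whose value vector fails to dominate those of its descendants; hence the produced circuit satisfies the definition of cancellation-free from Section~\ref{cancelfreelinear}. With that in place the corollary is exactly the specialization of the generic ``a cancellation-free heuristic cannot beat $\rho(n)\ge\lambda(n)$'' bound to these named algorithms, so nothing further is required.
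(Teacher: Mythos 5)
Your proposal is correct and follows the paper's own reasoning exactly: the heuristics only output cancellation-free circuits (the observation from \cite{boyarcombinationalappear}), so their output size is at least $C_{CF}(A)\ge C_\vee(A)$ by Proposition~\ref{orcancelremark}, and the witness family from Theorem~\ref{separationthm} then forces a worst-case ratio of $\Omega(n/\log^2 n)$; this is precisely the chain $\rho(n)\ge\lambda(n)\in\Omega(n/\log^2 n)$ the paper uses. Your extra paragraph on the upper-bound side goes beyond what the paper justifies (the paper's ``$\Theta$'' is really only supported as a lower bound), but that does not affect the correctness of the argument.
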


\section{Constant Depth}

\label{sec:constantdepth}
For unbounded depth, there is no known family of (polynomial time
computable) matrices known to require XOR circuits of 
superlinear size.
However, if one puts restrictions on the depth, superlinear lower
bounds are known  \cite{juknabook}.
In this case, we allow each gate to have unbounded fan-in, and instead
of counting the number of gates we count the number of wires in the
circuit. See Figure~\ref{fig:depth2example} for an example of a depth two circuit.

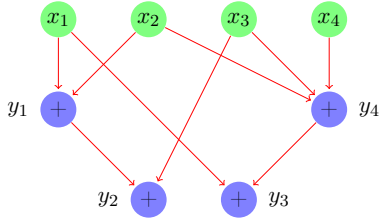
\begin{figure}
\begin{center}
\scalebox{0.8}{
 \begin{tikzpicture}[scale=0.75,shorten >=1pt,->,draw=black!50, node distance=\layersep,invis/.style={inner sep=0pt}]
    \tikzstyle{every pin edge}=[<-,shorten <=1pt]
    \tikzstyle{neuron}=[circle,fill=black!25,minimum size=17pt,inner sep=0pt]
    \tikzstyle{input neuron}=[neuron, fill=green!50];
    \tikzstyle{output neuron}=[neuron, fill=red!50];
    \tikzstyle{hidden neuron}=[neuron, fill=blue!50];
    \tikzstyle{annot} = [text width=4em, text centered]


    \node[invis](fisso) at (1,-9){};
	\node[invis](fisso2) at (9,1){}; 	
    
    \foreach \name / \x in {1,2,3,4}
         \node[input neuron] (I-\name) at (2*\x,0) {$x_{\x}$};

%
%
%
%
%
     \node[hidden neuron] (G-1) at (8,-2) {+};
     \path[color=red] (I-2) edge (G-1); 
	\path[color=red] (I-3) edge (G-1);
	\path[color=red] (I-4) edge (G-1);
    \node[right=0.05cm] at (G-1.east) {$y_4$};

     \node[hidden neuron] (G-2) at (2,-2) {+};
     \path[color=red] (I-1) edge (G-2); 
	\path[color=red] (I-2) edge (G-2);
    \node[left=0.05cm] at (G-2.west) {$y_1$};    

	\node[hidden neuron] (G-3) at (4,-4) {+};
     \path[color=red] (G-2) edge (G-3); 
	\path[color=red] (I-3) edge (G-3);
    \node[left=0.05cm] at (G-3.west) {$y_2$};    

		\node[hidden neuron] (G-4) at (6,-4) {+};
     \path[color=red] (G-1) edge (G-4); 
	\path[color=red] (I-1) edge (G-4);
    \node[right=0.05cm] at (G-4.east) {$y_3$};    

%
%
%
%
%
%



\end{tikzpicture}
}
\end{center}
\caption{An example of a depth $2$ circuit, computing the same matrix as
the circuits in Figure~\ref{fig:examplefig}. Notice that some gates have
fan-in larger than $2$. This circuit has size $9$.
\label{fig:depth2example}
}
\end{figure}

In particular, the circuit model where
the depth is bounded to be at most $2$ is well studied
(see e.g. \cite{juknabook}).
Similarly to previously, an XOR circuit in depth $2$ is a circuit
where
each gate computes
the XOR or its inputs. 
When considering matrices computed by XOR circuits, the general
situation in the two circuit models is very similar.
The following two results are due to Lupanov 
\cite{Lupanov1956}, see also \cite{juknabook}.

\begin{theorem}[Lupanov]
\label{lupanov2}
For every $n\times n$ matrix $A$, there exists
a depth 2 cancel\-lation-free circuit with at most
$O\left(\frac{n^2}{\log n} \right)$ wires computing $A$.
  Furthermore, almost every such matrix requires
$\Omega\left(\frac{n^2}{\log n} \right)$ wires.
\end{theorem}

Let $\lambda^d(n)$ denote $\lambda(n)$ for
circuits restricted to depth $d$ (recall that now size
is defined as the number of wires).
Neither  the separation in
\cite{gashkov2011complexity} nor that in
\cite{separatingnew}
seems to carry over to bounded depth circuits in any obvious way.
The separation presented in \cite[Theorem 5.8]{juknasergeevSurvey}  holds 
for any depth $d\geq 2$.

By inspecting the proof of Theorem~\ref{separationthm}, the upper 
bound on the size of the XOR circuit worked as follows:
First construct a circuit to compute $C$, and then construct a
circuit for $B$ with the outputs
of $C$ as inputs, that is, a circuit for $B$ that comes topologically
after $C$.
To get to an upper bound of  $O(n)$ wires, we use
Theorem \ref{upperboundtranspo}.
By using Theorem~\ref{lupanov2}
twice, we get a depth $4$ circuit of that size.

For depths $d=2$ and $d=3$, one can use arguments
similar to those in given in the proof of \cite[Theorem 5.8]{juknasergeevSurvey}) to show that the
separation still holds in these two cases. We summarize
this in the following theorem.
\begin{theorem}
\label{canceldepth4andup}
  Let $d\geq 2$.
 $\lambda^d(n)\in \Omega\left( \frac{n}{\log^{2}n}\right)$.
\end{theorem}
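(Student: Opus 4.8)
The plan is to re-run the construction from Theorem~\ref{separationthm} but replace each of the two $O(n)$-wire XOR circuits (for $B$ and for $C$) by depth-bounded circuits, and then observe that composing them gives a constant-depth circuit whose depth we can force down to $2$ or $3$ with a small extra trick, while the lower bound $C_\vee(A)\in\Omega(n^2/\log^2 n)$ is unaffected because Lemma~\ref{freelemma} makes no reference to depth. Concretely, by Theorem~\ref{lupanov2}, each of $B$ (dimensions $n\times 14\log n$) and $C$ ($14\log n\times n$) admits a depth-$2$ cancellation-free circuit; here the relevant bound is not $n^2/\log n$ but the rectangular version, which for an $n\times 14\log n$ matrix gives $O(n\log n/\log\log n)=O(n)$ wires when we feed $14\log n$ inputs into $n$ output parities (and symmetrically for $C$). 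Stacking the depth-$2$ circuit for $C$ below the depth-$2$ circuit for $B$ yields a depth-$4$ XOR circuit of $O(n)$ wires computing $A=BC$, which already establishes $\lambda^4(n)\in\Omega(n/\log^2 n)$.

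For $d=2$ and $d=3$ the idea is the one alluded to in the text (following \cite[Theorem 5.8]{juknasergeevSurvey}): the ``middle layer'' of the composed circuit carries only $14\log n$ distinct linear forms (the rows of $C$ applied to $\mathbf{x}$), so there are only $\binom{14\log n}{2}+14\log n = O(\log^2 n)$ possible XORs of pairs of them. One can therefore collapse depth by precomputing, in a first layer, all $O(\log^2 n)$ pairwise sums of the middle-layer forms — this costs $O(\log^2 n)\cdot 14\log n = O(\log^3 n) = o(n)$ wires — and then realize each of the $n$ output rows of $B$ as a bounded-fan-in sum over these precomputed sublinear-size building blocks, or more simply observe that each output of $B$ is a sum of at most $14\log n$ of the middle forms and hence can be written as a depth-$2$ sum directly on the $n$ original inputs, giving total wire count $O(n\log n/\log\log n)$ for the whole circuit in depth $2$. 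I would present the $d=2$ case carefully and note $d=3$ follows a fortiori; then $d\ge 4$ follows by padding the depth-$2$ or depth-$4$ construction with identity layers (or simply by noting that the depth-$2$ upper bound is itself a valid depth-$d$ circuit for every $d\ge 2$).

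The lower bound half requires no new work: by the analysis in the proof of Theorem~\ref{separationthm}, with probability $1-o(1)$ the matrix $A=BC$ is $2\log n$-free and has $|A|\in\Omega(n^2)$, so by Lemma~\ref{freelemma} every OR circuit (in particular every depth-$d$ cancellation-free circuit, and the bound $\Omega(|A|/(2\log n)^2)$ counts gates hence also wires) for $A$ has $\Omega(n^2/\log^2 n)$ wires. Dividing by the $O(n\,\mathrm{polylog})$ upper bound on $C_\oplus$ in depth $d$ gives $\lambda^d(n)\in\Omega(n/\log^2 n)$ — actually $\Omega(n/\log^3 n)$ if one is careless with the depth-$2$ wire count, so I would be slightly careful to route the middle layer through $O(\log^2 n)$ precomputed forms rather than naively, to keep the numerator at $O(n)$ and recover the claimed $n/\log^2 n$.

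\textbf{Main obstacle.} The delicate point is the depth-$2$ and depth-$3$ wire accounting: naively writing each of the $n$ outputs of $B$ as a depth-$2$ XOR over the $n$ inputs costs $O(n\log n)$ wires (since each row of $BC$ can have up to $\Theta(n)$ ones — indeed $|A|\in\Omega(n^2)$ means average row weight $\Omega(n)$), which would only give $\lambda^2(n)\in\Omega(1/\log^2 n)$, useless. The fix is to keep the two-stage structure $\mathbf{x}\mapsto C\mathbf{x}\mapsto B(C\mathbf{x})$ logically even while flattening it to depth $2$: the first (and only hidden) layer should compute the $14\log n$ forms of $C$ — that is $O(n\log n)=O(n\cdot\mathrm{polylog})$ wires from inputs, but we need it sublinear, so instead we must use Theorem~\ref{lupanov2}'s \emph{rectangular} bound $O(mn/\log\max(m,n))$ with $m=14\log n$, giving $O(n\log n/\log n)=O(n)$ wires for this transformation in depth $2$ — and then the output layer sums subsets of these $14\log n$ forms, contributing only $O(n\log n)$... and this is again the sticking point, because $14\log n$ forms summed into $n$ outputs with arbitrary subsets is $\Omega(n\log n)$ in the worst case. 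Resolving this cleanly is exactly the content of the argument in \cite[Theorem 5.8]{juknasergeevSurvey}, and I expect the honest route is to allow the separation to be stated as $\Omega(n/\log^2 n)$ only after verifying that the \emph{specific} random $B$ has its rows expressible with $O(n)$ total wires in the relevant depth — or, more robustly, to accept depth $4$ (where the $O(n)$ bound is immediate from two applications of Theorem~\ref{upperboundtranspo}/Theorem~\ref{lupanov2}) and then argue depths $2,3$ separately by the Jukna–Sergeev-style collapse, which is where essentially all the technical care goes.
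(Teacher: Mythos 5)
Your treatment of $d\ge 4$ is exactly the paper's: apply the (rectangular) Lupanov/depth-$2$ bound of Theorem~\ref{upperboundtranspo}/Theorem~\ref{lupanov2} to $C$ and then to $B$, stack them into a depth-$4$ XOR circuit with $O(n)$ wires, and divide into the depth-independent lower bound $C_\vee(A)\in\Omega(n^2/\log^2 n)$ from Lemma~\ref{freelemma}. That part is fine (modulo the slip ``$O(n\log n/\log\log n)=O(n)$''; the correct $O(n)$ comes from dividing $mn$ by $\log$ of the \emph{larger} dimension, which Theorem~\ref{upperboundtranspo}'s $\min$ provides).

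For $d=2,3$ there is a genuine gap, and you are looking for the repair in the wrong place. You try to push the depth-$2$ XOR upper bound down to $O(n)$ wires, but no such circuit exists for this $A$: if the total wire count were $O(n)$, then all but $O(1)$ middle gates would have value vectors of weight $o(n)$ and a constant fraction of outputs would have fan-in $O(1)$, forcing $\Omega(n)$ rows of $A$ to lie within Hamming distance $o(n)$ of one of $O(1)$ fixed vectors --- impossible, since distinct rows of $A=BC$ differ in $\approx n/2$ positions with high probability. Your concrete device also miscounts: the $O(\log^2 n)$ pairwise sums of the middle forms are linear forms of weight $\Theta(n)$ in the \emph{original} inputs, so realizing them in the first layer costs $\Theta(n)$ wires each (total $\Theta(n\log^2 n)$), while realizing them on top of the middle layer makes the circuit depth $3$. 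The honest depth-$2$ upper bound is $O(n\log n)$, and the claimed separation is then recovered not by improving the upper bound but by strengthening the \emph{lower} bound in bounded depth --- which is what the paper's pointer to \cite[Theorem 5.8]{juknasergeevSurvey} supplies: a depth-$2$ OR/SUM circuit with $W$ wires induces a covering of the ones of $A$ by all-ones rectangles of total perimeter at most $W$, and $k$-freeness forces each $a\times b$ rectangle to cover at most $k(a+b)$ ones, giving $W\ge |A|/k=\Omega(n^2/\log n)$; against the $O(n\log n)$ upper bound this yields $\lambda^2(n)\in\Omega(n/\log^2 n)$, with $d=3$ handled by a similar bounded-depth refinement. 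Your proposal instead asserts that ``the lower bound half requires no new work,'' which is precisely backwards for these two depths; to be fair, the paper itself only gestures at this step by citation, but your attempted substitute for it does not go through.
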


%

%
%
%
%

\section[Sierpinski]{
Computing the Sierpinski Matrix
}
In this section we prove that the $n\times n$ Sierpinski matrix, $S_n$,
needs $\frac{1}{2}n\log n$ gates when computed by a 
cancellation-free circuit, and that this suffices. The proof
strategy is surprisingly simple, it is essentially gate elimination where
more than one gate is eliminated in each step. Neither
Theorem \ref{morgensternlb} nor
Lemma \ref{freelemma} gives anything nontrivial for this matrix.

As mentioned previously, there is no known
(polynomial time computable) family of
matrices requiring XOR circuits of superlinear size.
However there are simple matrices that are
conjectured to require circuits of size $\Omega (n\log n)$.
One such matrix
is the Sierpinski 
 matrix, (Aaronson, personal communication
 and \cite{cstheorystackexchange}). The $n\times n$ Sierpinski 
(also called \emph{set disjointness}) matrix, $S_n$,
is defined inductively 
\[
S_2 = \begin{pmatrix} 1 & 0 \\ 1 & 1\end{pmatrix},
S_{2n} = \begin{pmatrix} S_n & 0 \\ S_n & S_n\end{pmatrix}
\]
Independently of this conjecture, Jukna and Sergeev~\cite[Problem 7.11]{juknasergeevSurvey}
have very recently asked if the ``set intersection matrix'', $K_n$, has
$C_\oplus (K_n)\in \omega(n)$.
The motivation for this is that 
$C_\vee(K_n)\in O(n)$, so if true
this would give a counterpart to Theorem \ref{separationthm}.

If $n$ is a power of two, the $n\times n$ set intersection matrix $K_n$
can be defined by associating each row and column with a subset of
$[\log n]$, and letting an entry be $1$ if and only if the  corresponding
 row and column sets have non-empty intersection.
One can also define $K_n$ inductively:
\[
K_2 = \begin{pmatrix} 0 & 0 \\ 0 & 1\end{pmatrix},
K_{2n} = \begin{pmatrix} K_n & K_n \\ K_n & J\end{pmatrix},
\]
where $J$ is the $n\times n$ matrix with $1$ in each entry.
It is easy to see that up to a reordering of the columns,
the complement of $K_n$ contains exactly the same rows as $S_n$.
Thus, $C_\oplus(K_n)$ is superlinear
if and only if $C_\oplus(S_n)$ is, since either matrix can be computed
from the other with at most $2n-1$ extra XOR gates, using cancellation
heavily.

To see that the set intersection matrix can be computed with
OR circuits of linear size observe that over the Boolean
semiring, $K_n$ decomposes into  
$K_n = B\cdot B^T$, where the $i$th row in $B$ is the binary
representation of $i$. Now
apply Theorem \ref{upperboundtranspo} to the $n\times \log n$ matrix 
$B$ and its transpose and perform the composition.

Any lower bound against XOR circuits must hold for cancellation-free
circuits, so a first step in proving superlinear lower bounds for the
set intersection matrix is to prove superlinear cancellation-free lower bounds
for the Sierpinski  matrix. Below we show that
$C_{CF}(S_n)= \frac{1}{2}n\log n$.
Our technique also holds for OR circuits.
This provides a simple example of a matrix family where the complements
are significantly easier to compute with OR circuits
than the matrices themselves.

\paragraph{Gate Elimination}

Suppose some subset of the input variables are restricted to the value $0$.
Now look at the resulting circuit.
Some of the gates will now compute the value $z=0\oplus w$.
In this case, we say that the gate is eliminated
since it no longer does any computation. The situation can be more
extreme, some gate might ``compute'' $z=0\oplus 0$.
In both cases, we can remove the gate from the circuit, and
forward the input if necessary (if $z$ is an output
gate, $w$ now outputs the result). In the second
case, the parent of $z$ will get eliminated, so the
effect might cascade.
For any subset of the variables,
there is a unique set of gates that become eliminated when setting
these variables to $0$.

In all of the following let $n$ be a power of $2$, and let $S_n$ be the
$n\times n$ Sierpinski  matrix.
The following proposition is easily established.

\begin{proposition}
For every $n$, the Sierpinski matrix $S_n$ has full rank, over both
 $\mathbb{R}$ and $\mathbb{F}_2$.
\end{proposition}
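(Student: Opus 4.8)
The plan is to prove that $S_n$ has full rank over both $\mathbb{R}$ and $\mathbb{F}_2$ by induction on $n$ (over powers of $2$), exploiting the block-triangular recursive definition $S_{2n} = \left(\begin{smallmatrix} S_n & 0 \\ S_n & S_n \end{smallmatrix}\right)$. The base case $S_2 = \left(\begin{smallmatrix} 1 & 0 \\ 1 & 1 \end{smallmatrix}\right)$ has determinant $1$, hence full rank over any ring, in particular over $\mathbb{R}$ and $\mathbb{F}_2$. For the inductive step, I would observe that $S_{2n}$ is block lower triangular, so its determinant factors as $\det(S_{2n}) = \det(S_n)\cdot\det(S_n) = \det(S_n)^2$. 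By the inductive hypothesis $\det(S_n) = \pm 1$ (in fact one checks it is always $1$, but $\pm 1$ suffices), so $\det(S_{2n}) = 1 \neq 0$, and this computation is valid simultaneously over $\mathbb{Z}$, hence over $\mathbb{R}$ and over $\mathbb{F}_2$ after reduction mod $2$. Since a square matrix over a field is invertible iff its determinant is nonzero, $S_{2n}$ has full rank over both fields.

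Alternatively, and perhaps more cleanly for a "proposition easily established," I would argue directly with row operations rather than determinants: given a vector $\mathbf{v} = (\mathbf{v}_1, \mathbf{v}_2)$ in the kernel of $S_{2n}$ (with $\mathbf{v}_1,\mathbf{v}_2 \in \gf^n$ or $\mathbb{R}^n$), the block structure gives $S_n \mathbf{v}_1 = 0$ from the top block rows and $S_n \mathbf{v}_1 + S_n \mathbf{v}_2 = 0$ from the bottom block rows; the first equation forces $\mathbf{v}_1 = 0$ by induction, and then the second forces $S_n \mathbf{v}_2 = 0$, hence $\mathbf{v}_2 = 0$ by induction again. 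Thus the kernel is trivial and $S_{2n}$ has full rank. This kernel argument is field-agnostic: it works verbatim over $\mathbb{F}_2$ and over $\mathbb{R}$, so it handles both cases in one stroke without separately tracking signs of determinants.

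I expect essentially no obstacle here; the only thing to be careful about is making the induction range explicitly over powers of $2$ (so that the recursion $S_{2n}$ is always well-defined) and not over all integers, and to state the base case at $n=2$ rather than $n=1$ since the inductive definition starts there. If one wants the stronger fact $\det(S_n) = 1$ for later use, it follows immediately from the determinant recursion $\det(S_{2n}) = \det(S_n)^2$ together with $\det(S_2) = 1$, but for the proposition as stated the triviality of the kernel is all that is needed.
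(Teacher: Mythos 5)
Your proof is correct; both the determinant argument and the kernel argument are valid, and either suffices. The paper simply asserts this proposition as ``easily established'' without giving a proof, so there is nothing to compare against --- your inductive argument via the block lower-triangular structure of $S_{2n}$ is exactly the routine verification the authors had in mind, and your remarks about restricting the induction to powers of $2$ are appropriate given the paper's convention that $n$ is a power of $2$ throughout that section.
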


We now proceed to the proof of the lower bound of
the Sierpinski  matrix for cancellation-free circuits.
It is our hope that this might be a step towards proving
an $\omega(n)$ lower bound for XOR circuits.
\begin{theorem}
\label{sierpinskilower}
For every $n\geq 2$, any cancellation-free circuit that computes 
the $n\times n$ Sierpinski  matrix has size at least
$\frac{1}{2}n\log n$.
\end{theorem}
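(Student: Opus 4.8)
The plan is to prove the recurrence $C_{CF}(S_{2n})\ge 2\,C_{CF}(S_n)+n$ for every power of two $n$, and then unfold it from the trivial base $C_{CF}(S_1)=0$ to obtain $C_{CF}(S_{2^k})\ge k\,2^{k-1}=\frac{1}{2}n\log n$. This is gate elimination in the sense described above, except that a whole batch of gates is eliminated in one step: one restriction will kill $C_{CF}(S_n)+n$ gates at once, while the surviving circuit still computes $S_n$. It is convenient to index the rows and columns of $S_n$ (with $n=2^k$) by subsets of $[k]$, so that the $(A,B)$ entry of $S_n$ is $1$ exactly when $B\subseteq A$. Then the first $n$ columns of $S_{2n}$ (``the first half'') and the last $n$ columns (``the second half'') each carry a copy of this indexing; the first $n$ output rows of $S_{2n}$ depend only on the first half, where they are precisely the rows of $S_n$; and the last $n$ output rows are the rows of $S_n$ on the first half and again on the second half.

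Fix a cancellation-free circuit $C$ computing $S_{2n}$ and split its gates into three classes according to where the support of the value vector $\kappa$ lies: $V_0$ (inside the first half), $V_1$ (inside the second half), and $V_{01}$ (meeting both halves). For the first bound, every gate used to compute an output $y_i$ with $i\le n$ has $\kappa$ dominated by the $i$th row of $S_{2n}$, which is supported in the first half; hence all such gates lie in $V_0$, and together with the inputs they form a cancellation-free circuit computing $S_n$, so $|V_0|\ge C_{CF}(S_n)$. For the second bound, set all first-half inputs to $0$. Because $C$ is cancellation-free, the gates that become $0$ are exactly those in $V_0$, and the gates that become redundant (now computing $w\oplus 0$) are exactly the members of $V_{01}$ having a child whose value vector is supported in the first half; call this set $\mathrm{merge}$ (a gate in $V_{01}$ has \emph{at most one} such child, since two of them would force the gate into $V_0$). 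After deleting these gates and rewiring, the survivors form a cancellation-free circuit computing $S_n$ on the second-half inputs, so $|V_1|+|V_{01}|-|\mathrm{merge}|\ge C_{CF}(S_n)$. Since $|C|=|V_0|+(|V_1|+|V_{01}|-|\mathrm{merge}|)+|\mathrm{merge}|$, the recurrence follows as soon as we show $|\mathrm{merge}|\ge n$.

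To count $\mathrm{merge}$, I would trace how the first-half part $\phi(g)$ of a value vector (the restriction of $\kappa(g)$ to the first-half coordinates) is assembled. Since the two children of any gate have disjoint supports, unwinding the recursion shows that for every gate $g$ the set $\phi(g)$ is the \emph{disjoint} union of the sets $\kappa(\ell_v)$ as $v$ ranges over the merge gates reachable from $g$ along a directed path lying entirely in $V_{01}$ (including $g$ itself if it is a merge gate), where $\ell_v$ is the unique first-half child of $v$. Apply this to the bottom output indexed by a set $A\subseteq[k]$: its first-half part equals $\{B : B\subseteq A\}$, so that set is partitioned into blocks of the form $\kappa(\ell_v)$. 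Let $a_A$ be the block containing the column indexed by $A$ itself, and let $v_A$ be the merge gate producing it; then $a_A\subseteq\{B : B\subseteq A\}$ and $A\in a_A$, so $A$ is the unique $\subseteq$-maximal element of $a_A$. Hence $A\mapsto a_A$ is injective, which forces the gates $v_A$ to be pairwise distinct; letting $A$ range over all $2^k=n$ subsets of $[k]$ yields $|\mathrm{merge}|\ge n$, completing the step.

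I expect the last paragraph to be the main obstacle. One has to set up the decomposition of $\phi(g)$ with some care (the ``at most one first-half child'' observation, genuine disjointness of the blocks, and termination of the unwinding in the finite acyclic graph), and then spot the invariant that makes the count injective rather than merely surjective, namely that the distinguished block $a_A$ has $\subseteq$-maximum $A$. By contrast, the two appeals to $C_{CF}(S_n)$, the identification of which gates vanish under the restriction, and the unfolding of the recurrence are all routine.
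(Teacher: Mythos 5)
Your proof is correct, and its skeleton coincides with the paper's: the same induction on $n$, and the same three-way partition of the gates (your $V_0$ is the paper's $C_1$, your surviving set $V_1\cup(V_{01}\setminus\mathrm{merge})$ is its $C_2$, and your $\mathrm{merge}$ is its $C_3$), with two inductive appeals plus the claim that at least $n$ gates die under the restriction. Where you genuinely diverge is in that last count. The paper argues linear-algebraically: each eliminated gate accounts for one wire leaving $C_1\cup\{x_1,\dots,x_n\}$, the value vectors at the $C_1$-ends of those wires must span, over $\mathbb{R}$ (cancellation-freeness makes coordinatewise XOR into real addition on disjoint supports), the first-half parts of the bottom outputs, i.e.\ the rows of $S_n$, and $\mathrm{rank}_{\mathbb{R}}(S_n)=n$. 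You instead exploit the subset structure of $S_n$ directly: the disjoint-block decomposition of the first-half support of the bottom output indexed by $A$, together with the observation that the block containing column $A$ has $A$ as its $\subseteq$-maximum, yields an injection from the $n$ outputs into $\mathrm{merge}$. Both are sound. The rank argument is shorter and is what the paper later reuses when discussing circuits with cancellation (it needs only that $S_n$ has full rank); your maximality argument is more elementary and is essentially the ``OR-spanning'' observation the paper must add separately to carry the theorem over to OR circuits, where real rank is unavailable. The step you flagged as delicate --- genuine disjointness of the blocks when both children lie in $V_{01}$ --- does hold, precisely because cancellation-freeness forces the two children of any gate to have disjoint supports, so a merge gate reachable from both children would put a nonempty set inside two disjoint sets.
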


\begin{proof}
The proof is by induction on $n$. For the base case, look at
the $2\times 2$ matrix $S_2$.
This clearly needs at least
$\frac{1}{2}2\log 2=1$ gate.

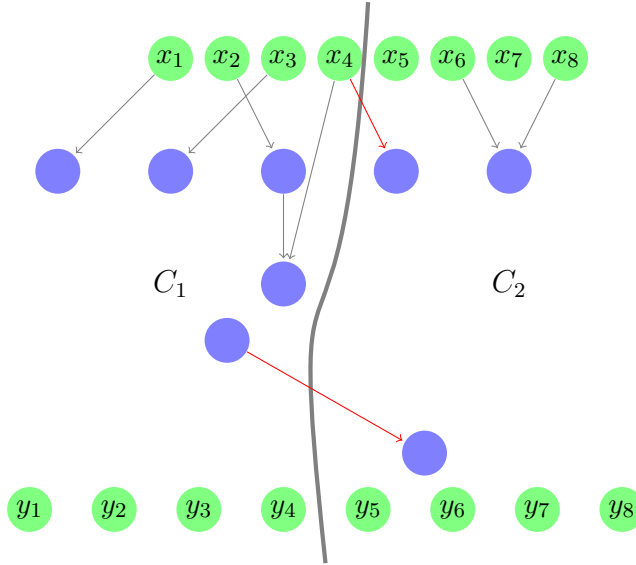
\begin{figure}
  \centering
\begin{tikzpicture}[scale=0.75,shorten >=1pt,->,draw=black!50, node distance=\layersep]
    \tikzstyle{every pin edge}=[<-,shorten <=1pt]
    \tikzstyle{neuron}=[circle,fill=black!25,minimum size=17pt,inner sep=0pt]
    \tikzstyle{input neuron}=[neuron, fill=green!50];
    \tikzstyle{output neuron}=[neuron, fill=red!50];
    \tikzstyle{hidden neuron}=[neuron, fill=blue!50];
    \tikzstyle{annot} = [text width=4em, text centered]


    \foreach \name / \x in {1,2,3,4,5,6,7,8}
         \node[input neuron] (I-\name) at (\x,0) {$x_{\x}$};

    \foreach \y in {1,2,3,4,5,6,7,8}
         \node[input neuron] (O-\y) at (1.5*\y-3,-8) {$y_{\y}$};
    
    \draw[-, ultra thick] (4.5,1) ..  controls (4,-7) and (3,-2) .. (3.75,-9);
    
    \foreach \x/\y in {-1/-2,1/-2,3/-2}
         \node[hidden neuron] (G-\x) at (\x,\y) {};

    \path (I-1) edge (G--1);
    \path (I-2) edge (G-3);
    \path (I-3) edge (G-1);
    
    \node[hidden neuron] (G-5) at (5,-2) {};
    \path[color=red] (I-4) edge (G-5); 

    \node[hidden neuron] (G-Q) at (2,-5) {};

    \node[hidden neuron] (x3x4) at (3,-4) {};
    \path  (I-4) edge (x3x4);
    \path  (G-3) edge (x3x4);
    
    \node[hidden neuron] (G-x2) at (5.5,-7) {};
    \path[color=red] (G-Q) edge (G-x2); 

    \node[font=\large] at (1,-4) {$C_1$};
    \node[font=\large] at (7,-4) {$C_2$};
    
    \node[hidden neuron] (x6x8) at (7,-2) {};
    \path (I-6) edge (x6x8);
    \path (I-8) edge (x6x8);
    




\end{tikzpicture}
  \caption{Figure illustrating the inductive step.
Due to monotinicity there is no wire crossing from right to left.
The gates on the left hand side are in $C_1$.
Notice that wires crossing the cut are red, and that
these wires become constant when $x_1,\ldots ,x_n$ are set
to $0$, so the gates with one such input wire are in $C_3$.
The rest are in $C_2$.}
  \label{fig:sierpinskisit}
\end{figure}

Suppose the statement is true for some $n$ and consider
the $2n\times 2n$  matrix $S_{2n}$. 
Denote the output gates $y_1,\ldots,y_{2n}$
and the inputs $x_1,\ldots,x_{2n}$.
Partition the gates of $C$ into three disjoint sets,
$C_1,C_2$ and $C_3$
(Figure \ref{fig:sierpinskisit} illustrates the situation),
defined as follows:
\begin{itemize}
\item $C_1$: The gates having only inputs from $x_1,\ldots,x_n$ and
$C_1$.
Equivalently the gates not reachable from inputs $x_{n+1},\ldots,x_{2n}$.
\item $C_2$: The gates in $C-C_1$ that are not eliminated
when inputs $x_1,\ldots,x_n$ are set to $0$.
\item $C_3$: $C-(C_1\cup C_2)$. That is, the gates in $C-C_1$ that 
do become eliminated when inputs $x_1,\ldots,x_n$ are set to $0$.
\end{itemize}
Obviously $|C|=|C_1|+|C_2|+|C_3|$.
We will now give lower bounds on the sizes of $C_1$, $C_2$, and $C_3$.

\paragraph{$C_1$:}
Since the circuit is cancellation-free, the
outputs $y_1,\ldots,y_n$ and all their predecessors are in $C_1$.
By the induction hypothesis, $|C_1|\geq \frac{1}{2}n\log n$.

\paragraph{$C_2$:}
Since the gates in $C_2$ 
are not eliminated, they
compute $S_n$ on the
inputs $x_{n+1},\ldots,x_{2n}$. By the induction hypothesis,
 $|C_2|\geq \frac{1}{2}n\log n$.

\paragraph{$C_3$:} The goal is to prove that this set has size
at least $n$. 
Let $\delta(C_1)$ be the set of wires
from $C_1\cup \{x_1,\ldots,x_n\}$ to $C_2\cup C_3$. We first prove that
$|C_3|\geq |\delta(C_1)|$.

By definition, all gates in $C_1$ attain the value $0$ when
$x_1,\ldots,x_n$ are set to $0$.
Let $(v,w)\in \delta(C_1)$ be arbitrary.
Since $v\in C_1\cup \{x_1,\ldots , x_n\}$,
$w$ becomes eliminated, so $w\in C_3$. By definition, every $u\in C_3$
can only have one child in $C_1$.
So $|C_3|\geq |\delta(C_1)|$.

We now show that $|\delta(C_1)|\geq n$.
Let the endpoints of $\delta(C_1)$ in $C_1$ be $e_1,\ldots , e_p$
and let their corresponding value vectors be $v_1,\ldots , v_p$.

The circuit is cancellation\--free, so coordinate\-wise
addition corresponds to addition in $\mathbb{R}$. 
Now look at the value vectors of the output gates $y_{n+1},\ldots,y_{2n}$.
For each of these, the vector consisting of the
first $n$ coordinates must be in $span_{\mathbb{R}}(v_1,\ldots ,v_p)$, but
the dimension of $S_n$ is $n$, so $p\geq n$.
We have that $|C_3|\geq |\delta(C_1)|\geq n$, so
\[
|C|=|C_1|+|C_2|+|C_3| \geq
\frac{1}{2}n\log n + \frac{1}{2}n\log n
+ n = \frac{1}{2}(2n)\log(2n).
\]
\qed
\end{proof}
This is tight:
\begin{proposition}
  The Sierpinski matrix can be computed
by a cancellation-free circuit
using $\frac{1}{2}n\log n$ gates.
\label{constructiveSierpinskiUpperBound}
\end{proposition}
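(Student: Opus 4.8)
The plan is to prove Proposition \ref{constructiveSierpinskiUpperBound} by induction on $n$ (a power of $2$), mirroring the decomposition used in the lower bound proof of Theorem \ref{sierpinskilower}. For the base case $n=2$, the matrix $S_2=\left(\begin{smallmatrix}1&0\\1&1\end{smallmatrix}\right)$ is computed by the single gate $y_2=x_1\oplus x_2$ together with $y_1=x_1$, which is $\frac{1}{2}\cdot 2\log 2 = 1$ gate. For the inductive step, I would use the recursive block structure $S_{2n}=\left(\begin{smallmatrix}S_n & 0\\ S_n & S_n\end{smallmatrix}\right)$ directly: the first $n$ outputs $y_1,\ldots,y_n$ depend only on $x_1,\ldots,x_n$ and form exactly $S_n$ applied to that input block, so by the induction hypothesis they can be computed with a cancellation-free circuit $C_1$ of size $\frac{1}{2}n\log n$. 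Symmetrically, build a second independent cancellation-free circuit $C_2$ of size $\frac{1}{2}n\log n$ computing $S_n$ on the inputs $x_{n+1},\ldots,x_{2n}$; call its outputs $z_1,\ldots,z_n$.

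The remaining task is to produce the outputs $y_{n+1},\ldots,y_{2n}$, which by the block structure equal $(S_n\mathbf{x}')_j \oplus (S_n\mathbf{x}'')_j$ where $\mathbf{x}'=(x_1,\ldots,x_n)$ and $\mathbf{x}''=(x_{n+1},\ldots,x_{2n})$ — that is, $y_{n+j} = y_j \oplus z_j$ for each $j\in[n]$. So I add $n$ new gates, one for each $j$, taking the already-computed output $y_j$ from $C_1$ and $z_j$ from $C_2$ and XORing them. This gives a total gate count of $\frac{1}{2}n\log n + \frac{1}{2}n\log n + n = n\log n + n = n(\log n + 1) = n\log(2n) = \frac{1}{2}(2n)\log(2n)$, as required.

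The step that requires care is verifying that the combined circuit is genuinely cancellation-free, not merely that it computes the right matrix. The value vector of the gate computing $y_{n+j}$ (over $2n$ coordinates) is the indicator of the support of row $j$ of $S_n$ placed in the first $n$ coordinates, union the same pattern in the last $n$ coordinates; since $C_1$ lives entirely on the first $n$ coordinates and $C_2$ entirely on the last $n$, the two operands have disjoint supports, so no cancellation occurs at this gate, and its value vector dominates those of both operands. Within $C_1$ and $C_2$ the cancellation-free property holds by the induction hypothesis, and since every ancestor relation in the combined circuit either stays within one of these subcircuits or passes through one of the new top gates (where domination was just checked), the whole circuit is cancellation-free. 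I expect this disjoint-support observation — together with noting that each new gate is used at most once so it does not interfere with ancestor relations among the $y_j$ themselves — to be the only genuinely non-routine point; the arithmetic identity $y_{n+j}=y_j\oplus z_j$ is immediate from the recursive definition of $S_{2n}$.
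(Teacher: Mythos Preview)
Your proof is correct and follows essentially the same divide-and-conquer approach as the paper: build two recursive copies of the $S_n$ circuit on the two halves of the input and then spend $n$ extra gates to combine them into the outputs $y_{n+1},\ldots,y_{2n}$. Your write-up is in fact more thorough than the paper's, since you explicitly verify the cancellation-free property via the disjoint-support observation, whereas the paper leaves that implicit.
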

\begin{proof}
This is clearly true for $S_2$. Assume that $S_n$ can be computed using
$\frac{1}{2}n\log n$ gates. Consider the matrix $S_{2n}$.
Construct the circuit
in a divide and conquer manner by constructing recursively on the variables $x_1,\ldots,x_n$
and $x_{n+1},\ldots, x_{2n}$. This gives outputs $y_1,\ldots,y_n$.
After this use $n$ operations to finish the outputs $y_{n+1},\ldots y_{2n}$.
This adds up to exactly $\frac{1}{2}(2n)\log (2n)$. \qed
\end{proof}

\paragraph{Circuits With Cancellation}

In the proof of Theorem \ref{sierpinskilower},
we used
the cancel\-lation-free property when estimating the sizes of both $C_1$
and $C_3$. However,
since $S_n$ has full rank over $\gf$,
a similar dimensionality argument to that used when estimating $C_3$
holds even if the circuits use cancellation.
Therefore we might replace
the cancellation-free assumption with the assumption that for
the $2n\times 2n$
Sierpinski matrix, there is no path from $x_{n+i}$ to
$y_j$ for $i\geq 1$, $j\leq n$.
We have not been able to show whether or not this is the case for
minimum sized circuits, although we have experimentally verified
that even for circuits where cancellation is allowed, 
 the matrices $S_2,S_4,S_8$ do not admit circuits smaller than the
lower bound from Theorem  \ref{sierpinskilower}.

\paragraph{OR circuits}
In the proof of Theorem \ref{sierpinskilower},
the estimates for $C_1$ and $C_2$ hold for OR circuits too,
but when estimating $C_3$, it does not suffice to appeal to rank over
$\gf$ or $\mathbb{R}$.
However, it is not hard to see that any set of row vectors
that ``spans'' $S_n$
(with the operation being coordinate-wise OR)
must have size at least $n$.
\begin{theorem}
  Theorem \ref{sierpinskilower} holds for OR circuits as well.
\end{theorem}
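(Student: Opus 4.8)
The plan is to revisit the proof of Theorem~\ref{sierpinskilower} and check that every step either already works verbatim for OR circuits or can be patched. The partition of the circuit into $C_1$, $C_2$, $C_3$ depends only on the circuit's graph structure and on the notion of gate elimination (which makes sense for OR circuits, since an OR of $0$ with $w$ is $w$), so the partition itself carries over unchanged. The induction is again on $n$, with base case $S_2$ requiring at least one gate. The bounds $|C_1|\geq\frac12 n\log n$ and $|C_2|\geq\frac12 n\log n$ follow from the induction hypothesis exactly as before: $C_1$ contains the outputs $y_1,\dots,y_n$ and all their predecessors, which form an OR circuit for $S_n$; and $C_2$, being the uneliminated gates after setting $x_1,\dots,x_n$ to $0$, computes $S_n$ on $x_{n+1},\dots,x_{2n}$.

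The only step needing genuine modification is the lower bound $|C_3|\geq n$. The combinatorial half of that argument, namely $|C_3|\geq|\delta(C_1)|$, used only that each gate in $C_3$ has at most one child in $C_1\cup\{x_1,\dots,x_n\}$ (so that it becomes eliminated rather than cascading), and that is a purely structural fact independent of the gate operation; so it survives. What fails is the linear-algebra step $|\delta(C_1)|\geq n$, which invoked that the projections onto the first $n$ coordinates of the value vectors of $y_{n+1},\dots,y_{2n}$ lie in the $\mathbb{R}$-span of the value vectors $v_1,\dots,v_p$ of the endpoints of $\delta(C_1)$ in $C_1$, together with $\mathrm{rank}_{\mathbb{R}}(S_n)=n$. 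For OR circuits, ``span'' must be replaced by closure under coordinate-wise OR: the first-$n$-coordinate projection of each $\kappa(y_{n+i})$ is obtained by OR-ing together some subset of $v_1,\dots,v_p$ (these are the only ``entry points'' of $C_1$-information into $C_2\cup C_3$, and everything downstream is built by OR). So it suffices to prove the analogue of full rank: any set of Boolean vectors whose OR-closure contains all $n$ rows of $S_n$ must have size at least $n$.

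For that final claim I would argue as follows. The rows of $S_n$ are exactly the $n$ distinct vectors $r_1,\dots,r_n$ where $r_i$ is the binary indicator of a certain subset of $[\log n]$ — concretely, reading $S_n$ as the set-disjointness / subset matrix, row $i$ corresponds to a subset $T_i\subseteq[\log n]$ and row $r_i$ is its characteristic-style vector over the $n$ columns; the key structural feature is that $S_n$ is lower-triangular with $1$s on the diagonal after the natural ordering, so for each $i$ there is a coordinate $c_i$ with $(r_i)_{c_i}=1$ but $(r_j)_{c_i}=0$ for all $j<i$ in that ordering. Hence no $r_i$ can be written as a coordinate-wise OR of other rows $r_j$ with $j\neq i$ (any such OR would have a $1$ in some position where $r_i$ has a $0$, using triangularity in the appropriate direction), so each $r_i$ must literally be one of the generating vectors $v_1,\dots,v_p$, giving $p\geq n$. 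I would spell this out via the triangular structure: if $w$ is in the OR-closure of a set $V$ and $w$ is "join-irreducible" with respect to the rows — i.e.\ equals no OR of other elements strictly below it in the coordinate order — then $w\in V$. The main obstacle is simply making this join-irreducibility statement precise and verifying it from the recursive definition $S_{2n}=\begin{pmatrix}S_n&0\\S_n&S_n\end{pmatrix}$ by an easy induction; everything else is a transcription of the earlier proof with $\mathrm{span}_{\mathbb{R}}$ replaced by OR-closure and "$\dim=n$" replaced by "$n$ OR-irreducible rows". Plugging $|C_3|\geq|\delta(C_1)|\geq n$ back in gives $|C|\geq\frac12 n\log n+\frac12 n\log n+n=\frac12(2n)\log(2n)$, completing the induction.
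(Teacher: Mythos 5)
Your overall plan is exactly the paper's: reuse the $C_1,C_2,C_3$ decomposition verbatim, note that the bounds on $|C_1|$, $|C_2|$ and the structural inequality $|C_3|\geq|\delta(C_1)|$ carry over, and replace the $\mathbb{R}$-rank argument by the statement that any set of Boolean vectors whose coordinate-wise-OR closure contains all $n$ rows of $S_n$ has size at least $n$. The paper leaves that last statement as ``not hard to see''; you attempt to prove it, and your proof of it is where the gap lies.

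You argue that no row $r_i$ of $S_n$ is an OR of \emph{other rows}, and conclude that ``each $r_i$ must literally be one of the generating vectors $v_1,\dots,v_p$.'' That inference is false: the generators are value vectors of gates in $C_1$, i.e.\ arbitrary Boolean vectors, not rows of $S_n$, and a row with more than one $1$ is always the OR of strictly smaller non-row vectors. Concretely, the $n$ unit vectors $e^{(1)},\dots,e^{(n)}$ generate every row of $S_n$ under OR, yet contain none of $r_2,\dots,r_n$; so your ``join-irreducibility'' lemma, read as ``$r_i$ is not an OR of vectors strictly below it, hence $r_i\in V$,'' fails for every row with at least two ones. The correct argument still uses the triangular structure you identify, but applied to witness coordinates rather than whole rows: since $S_n$ is lower triangular with $1$s on the diagonal, $r_i$ has a $1$ in coordinate $i$, so in any representation $r_i=\bigvee_{k\in K_i}v_k$ with all $v_k\leq r_i$ some generator $v_{k(i)}$ has $(v_{k(i)})_i=1$. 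If $k(i)=k(i')$ for $i<i'$, then $v_{k(i)}\leq r_i$ and $(v_{k(i)})_{i'}=1$ would force $(r_i)_{i'}=1$, contradicting $(S_n)_{i,i'}=0$. Hence $i\mapsto k(i)$ is injective and $p\geq n$. With that substitution your argument closes; everything else in your write-up matches the paper's (very brief) proof.
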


This proof strategy for Theorem~\ref{sierpinskilower} has recently been
used by Sergeev to prove
similar lower bounds for another
family of Boolean matrices  in the OR model \cite{sergeevadditivecompl}.
As mentioned in the introduction, 
Theorem~\ref{sierpinskilower} can be shown using another strategy.
In \cite{DBLP:journals/tsmc/Kennes92}, Kennes gives a
lower bound on the additive complexity for computing the M\"{o}bius
transformation of a Boolean lattice. 
It is not
hard to verify that the Sierpinski matrix corresponds
to the M\"{o}bius transformation induced by the subset lattice. Combining
this  observation with Kennes' result gives the same lower bound.

Since $C_{\vee}(K_n)\in O(n)$ and $K_n$ contains the same rows
as $\bar{S}_n$, the complement of $S_n$,  the Sierpinski matrix
is harder to compute than its complement.

\begin{corollary}
  $C_{\vee}(S_n) = \Theta (\log n) C_\vee(\bar{S}_n)$.
\end{corollary}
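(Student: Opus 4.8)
The plan is to assemble the corollary from three facts already at hand, all of which concern the relationship between $S_n$, its complement $\bar S_n$, and the set intersection matrix $K_n$. First I would invoke Theorem~\ref{sierpinskilower} (and its OR-circuit strengthening, the theorem immediately preceding the corollary in the excerpt): this gives the lower bound $C_\vee(S_n) \ge \frac{1}{2} n\log n \in \Omega(n\log n)$. Second, I would record the matching upper bound $C_\vee(S_n) \in O(n\log n)$ — this follows either from Theorem~\ref{upperboundtranspo} applied directly, or even more cheaply from Proposition~\ref{constructiveSierpinskiUpperBound}, since a cancellation-free circuit is in particular an OR circuit (Proposition~\ref{orcancelremark}). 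Hence $C_\vee(S_n) = \Theta(n\log n)$.

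Third, for the complement side I would use the remark from the text that, up to a reordering of the columns, $K_n$ contains exactly the same rows as $\bar S_n$. Reordering columns is just relabelling input variables, and deleting or duplicating rows costs nothing extra in either model, so $C_\vee(\bar S_n) = C_\vee(K_n) + O(n)$ — actually $C_\vee(\bar S_n) \le C_\vee(K_n)$ once we account for the fact that $\bar S_n$ is obtained from $K_n$ by a permutation of columns and a selection of rows (and if some rows of $\bar S_n$ are repeated, we reuse the corresponding output of the $K_n$ circuit at no cost). Combined with the quoted fact $C_\vee(K_n) \in O(n)$ (which the excerpt proves via the decomposition $K_n = B\cdot B^T$ over the Boolean semiring and Theorem~\ref{upperboundtranspo}), this yields $C_\vee(\bar S_n) \in O(n)$. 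Since $\bar S_n$ has a row equal to $e^{(1)}$-type patterns forcing $\Omega(n)$ wires just to touch all inputs that appear — more simply, any circuit computing a matrix with $\Omega(n)$ distinct nonzero rows needs $\Omega(n)$ gates — we get $C_\vee(\bar S_n) = \Theta(n)$.

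Putting the pieces together, $C_\vee(S_n) = \Theta(n\log n) = \Theta(\log n)\cdot \Theta(n) = \Theta(\log n)\, C_\vee(\bar S_n)$, which is the claimed identity.

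The only genuinely delicate point is the bookkeeping in the second paragraph: one must check carefully that passing from $K_n$ to $\bar S_n$ really is free in the OR-circuit model. Column reordering is a harmless relabelling of inputs; the subtlety is the claim ``contains exactly the same rows'' — if $\bar S_n$'s rows are a sub(multi)set of $K_n$'s rows, an OR circuit for $K_n$ already computes every row we need, and we simply designate the appropriate internal nodes as the outputs $y_1,\dots,y_n$ of $\bar S_n$, so $C_\vee(\bar S_n)\le C_\vee(K_n)$. The reverse containment is not needed. I expect this step to be the main (and essentially the only) obstacle, and it is more a matter of stating the reduction cleanly than of any real difficulty; everything else is a direct citation of results established earlier in the paper.
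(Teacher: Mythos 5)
Your proposal is correct and follows essentially the same route as the paper: the lower bound $C_\vee(S_n)\in\Omega(n\log n)$ from the OR-circuit version of Theorem~\ref{sierpinskilower}, the matching $O(n\log n)$ upper bound from the explicit cancellation-free construction, and $C_\vee(\bar S_n)=\Theta(n)$ via the observation that $K_n$ has the same rows as $\bar S_n$ up to column permutation together with $C_\vee(K_n)\in O(n)$. Your extra care about the row/column relabelling step is sound but not a point the paper belabours.
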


Until very recently, this was the largest gap between the OR complexity
of $A$ and $\bar{A}$ for an explicit matrix. See \cite{igorimproves}
for a very recent manuscript describing a construction greatly improving
on this.
\cite{juknasergeevSurvey}).

\section{Conclusions and Open Problems}
We show the existence of matrices, for which 
OR circuits and cancellation-free XOR circuits are both
a factor of $\Omega\left( \frac{n}{\log^{2} n}\right)$ larger
than the smallest XOR circuit. This separation holds
in unbounded depth and any constant depth of at least $2$.

This means that when designing XOR (sub)circuits, 
it can be important that the methods employed can produce
circuits which have cancellation.

If a cancellation-free or an OR circuit computes
the Sierpinski matrix correctly, it has 
size at least $\frac{1}{2}n\log n$. For this particular family of matrices,
it is not obvious to what extent cancellation can help. It would
be very interesting to determine this,
since it would automatically provide a converse
to Theorem~\ref{separationthm}.  

\section*{Acknowledgments}
The authors would like to thank Elad Verbin for an idea which eventually
led to the proof of Theorem~\ref{separationthm},
Igor Sergeev and Stasys Jukna for references to related papers,
Janne Korhonen for pointing out the result of Kennes, Avishay Tal 
for pointing out an alternative proof of a slightly weaker version of
Theorem~\ref{separationthm}, and Mika G\"{o}\"{o}s for 
helpful discussions.

We would also like to thank the anonymnous referees for many valuable
suggestions.

\bibliographystyle{model1-num-names.bst}
\bibliography{bibCancel}

\begin{thebibliography}{36}
\expandafter\ifx\csname natexlab\endcsname\relax\def\natexlab#1{#1}\fi
\providecommand{\url}[1]{\texttt{#1}}
\providecommand{\href}[2]{#2}
\providecommand{\path}[1]{#1}
\providecommand{\DOIprefix}{doi:}
\providecommand{\ArXivprefix}{arXiv:}
\providecommand{\URLprefix}{URL: }
\providecommand{\Pubmedprefix}{pmid:}
\providecommand{\doi}[1]{\href{http://dx.doi.org/#1}{\path{#1}}}
\providecommand{\Pubmed}[1]{\href{pmid:#1}{\path{#1}}}
\providecommand{\bibinfo}[2]{#2}
\ifx\xfnm\relax \def\xfnm[#1]{\unskip,\space#1}\fi
\bibitem[{Boyar and Find(2013)}]{DBLP:conf/fct/BoyarF13}
\bibinfo{author}{J.~Boyar}, \bibinfo{author}{M.~G. Find},
\newblock \bibinfo{title}{Cancellation-free circuits in unbounded and bounded
  depth},
\newblock in: \bibinfo{editor}{L.~Gasieniec}, \bibinfo{editor}{F.~Wolter}
  (Eds.), \bibinfo{booktitle}{FCT}, volume \bibinfo{volume}{8070} of
  \textit{\bibinfo{series}{Lecture Notes in Computer Science}},
  \bibinfo{publisher}{Springer}, \bibinfo{year}{2013}, pp.
  \bibinfo{pages}{159--170}.
\bibitem[{Boyar et~al.(2013)Boyar, Matthews, and
  Peralta}]{boyarcombinationalappear}
\bibinfo{author}{J.~Boyar}, \bibinfo{author}{P.~Matthews},
  \bibinfo{author}{R.~Peralta},
\newblock \bibinfo{title}{Logic minimization techniques with applications to
  cryptology},
\newblock \bibinfo{journal}{J. Cryptology} \bibinfo{volume}{26}
  (\bibinfo{year}{2013}) \bibinfo{pages}{280--312}.
\bibitem[{Gashkov and Sergeev(2011)}]{gashkov2011complexity}
\bibinfo{author}{S.~Gashkov}, \bibinfo{author}{I.~Sergeev},
\newblock \bibinfo{title}{On the complexity of linear {B}oolean operators with
  thin matrices},
\newblock \bibinfo{journal}{Journal of Applied and Industrial Mathematics}
  \bibinfo{volume}{5} (\bibinfo{year}{2011}) \bibinfo{pages}{202--211}.
\bibitem[{Grinchuk and Sergeev(2011)}]{grinchukandsergeev}
\bibinfo{author}{M.~Grinchuk}, \bibinfo{author}{I.~Sergeev},
\newblock \bibinfo{title}{Thin circulant matrices and lower bounds on the
  complexity of some boolean operators},
\newblock \bibinfo{journal}{Discret. Anal. \& Issl. Oper.} \bibinfo{volume}{18}
  (\bibinfo{year}{2011}) \bibinfo{pages}{38–53}.
\bibitem[{Jukna(2006)}]{DBLP:journals/siamcomp/Jukna06}
\bibinfo{author}{S.~Jukna},
\newblock \bibinfo{title}{Disproving the single level conjecture},
\newblock \bibinfo{journal}{SIAM J. Comput.} \bibinfo{volume}{36}
  (\bibinfo{year}{2006}) \bibinfo{pages}{83--98}.
\bibitem[{Jukna and Sergeev(2013)}]{juknasergeevSurvey}
\bibinfo{author}{S.~Jukna}, \bibinfo{author}{I.~Sergeev},
\newblock \bibinfo{title}{Complexity of linear boolean operators},
\newblock \bibinfo{journal}{Foundations and Trends in Theoretical Computer
  Science} \bibinfo{volume}{9} (\bibinfo{year}{2013}) \bibinfo{pages}{1--123}.
\bibitem[{Find et~al.(2013)Find, G\"{o}\"{o}s, Kaski, and
  Korhonen}]{separatingnew}
\bibinfo{author}{M.~Find}, \bibinfo{author}{M.~G\"{o}\"{o}s},
  \bibinfo{author}{P.~Kaski}, \bibinfo{author}{J.~Korhonen},
\newblock \bibinfo{title}{Separating {Or, Sum and XOR Circuits}},
\newblock \bibinfo{journal}{arXiv preprint}  (\bibinfo{year}{2013}).
\bibitem[{Kennes(1992)}]{DBLP:journals/tsmc/Kennes92}
\bibinfo{author}{R.~Kennes},
\newblock \bibinfo{title}{Computational aspects of the mobius transformation of
  graphs},
\newblock \bibinfo{journal}{IEEE Transactions on Systems, Man, and Cybernetics}
  \bibinfo{volume}{22} (\bibinfo{year}{1992}) \bibinfo{pages}{201--223}.
\bibitem[{Selezneva.(2012)}]{seleznevaProc}
\bibinfo{author}{S.~Selezneva.},
\newblock \bibinfo{title}{Lower bound on the complexity of finding polynomials
  of boolean functions in the class of circuits with separated variables},
\newblock in: \bibinfo{booktitle}{Proc. of 11-th Int. Seminar on Discrete Math.
  and Its Appl., Moscow}, \bibinfo{year}{2012}.
\bibitem[{Selezneva(2013)}]{seleznevaArticle}
\bibinfo{author}{S.~Selezneva},
\newblock \bibinfo{title}{Lower bound on the complexity of finding polynomials
  of boolean functions in the class of circuits with separated variables},
\newblock \bibinfo{journal}{Computational Mathematics and Modeling}
  \bibinfo{volume}{24} (\bibinfo{year}{2013}) \bibinfo{pages}{146--152}.
\bibitem[{Pippenger(1976)}]{DBLP:conf/focs/Pippenger76}
\bibinfo{author}{N.~Pippenger},
\newblock \bibinfo{title}{On the evaluation of powers and related problems
  (preliminary version)},
\newblock in: \bibinfo{booktitle}{FOCS}, \bibinfo{publisher}{IEEE Computer
  Society}, \bibinfo{year}{1976}, pp. \bibinfo{pages}{258--263}.
\bibitem[{Pippenger(1980)}]{DBLP:journals/siamcomp/Pippenger80}
\bibinfo{author}{N.~Pippenger},
\newblock \bibinfo{title}{On the evaluation of powers and monomials},
\newblock \bibinfo{journal}{SIAM J. Comput.} \bibinfo{volume}{9}
  (\bibinfo{year}{1980}) \bibinfo{pages}{230--250}.
\bibitem[{Garey and Johnson(1979)}]{DBLP:books/fm/GareyJ79}
\bibinfo{author}{M.~R. Garey}, \bibinfo{author}{D.~S. Johnson},
  \bibinfo{title}{Computers and Intractability: A Guide to the Theory of
  NP-Completeness}, \bibinfo{publisher}{W. H. Freeman}, \bibinfo{year}{1979}.
\bibitem[{Nechiporuk(1963)}]{nechiporuk1963rectifier}
\bibinfo{author}{E.~Nechiporuk},
\newblock \bibinfo{title}{Rectifier networks},
\newblock in: \bibinfo{booktitle}{Soviet Physics Doklady},
  volume~\bibinfo{volume}{8}, \bibinfo{year}{1963}, p.~\bibinfo{pages}{5}.
\bibitem[{Lupanov(1956)}]{Lupanov1956}
\bibinfo{author}{O.~Lupanov},
\newblock \bibinfo{title}{On rectifier and switching-and-rectifier schemes},
\newblock \bibinfo{journal}{Dokl. Akad. 30 Nauk SSSR 111, 1171-1174.}
  (\bibinfo{year}{1956}). \bibinfo{note}{English translation available at
  http://www.thi.informatik.uni-frankfurt.de/\~{}jukna/lupanov56.pdf}.
\bibitem[{Jukna(2012)}]{juknabook}
\bibinfo{author}{S.~Jukna}, \bibinfo{title}{Boolean Function Complexity:
  Advances and Frontiers}, \bibinfo{publisher}{Springer Berlin Heidelberg},
  \bibinfo{year}{2012}.
\bibitem[{Nechiporuk(1969)}]{nechiporuktopologicalprinciples}
\bibinfo{author}{E.~Nechiporuk},
\newblock \bibinfo{title}{On the topological principles of self-correction},
\newblock \bibinfo{journal}{Problemy Kibernetika}  (\bibinfo{year}{1969})
  \bibinfo{pages}{5--102}. \bibinfo{note}{(In Russian)}.
\bibitem[{Mehlhorn(1979)}]{mehlhorn1979some}
\bibinfo{author}{K.~Mehlhorn},
\newblock \bibinfo{title}{Some remarks on \mbox{Boolean} sums},
\newblock \bibinfo{journal}{Acta Informatica} \bibinfo{volume}{12}
  (\bibinfo{year}{1979}) \bibinfo{pages}{371--375}.
\bibitem[{Pippenger(1980)}]{DBLP:journals/tcs/Pippenger80}
\bibinfo{author}{N.~Pippenger},
\newblock \bibinfo{title}{On another boolean matrix},
\newblock \bibinfo{journal}{Theor. Comput. Sci.} \bibinfo{volume}{11}
  (\bibinfo{year}{1980}) \bibinfo{pages}{49--56}.
\bibitem[{Wegener(1980)}]{DBLP:journals/acta/Wegener80}
\bibinfo{author}{I.~Wegener},
\newblock \bibinfo{title}{A new lower bound on the monotone network complexity
  of boolean sums},
\newblock \bibinfo{journal}{Acta Inf.} \bibinfo{volume}{13}
  (\bibinfo{year}{1980}) \bibinfo{pages}{109--114}.
\bibitem[{Hirsch and Melanich(2012)}]{edwardolga}
\bibinfo{author}{E.~Hirsch}, \bibinfo{author}{O.~Melanich},
  \bibinfo{howpublished}{Personal communication}, \bibinfo{year}{2012}.
\bibitem[{Alon et~al.(1990)Alon, Karchmer, and Wigderson}]{AlonKW90}
\bibinfo{author}{N.~Alon}, \bibinfo{author}{M.~Karchmer},
  \bibinfo{author}{A.~Wigderson},
\newblock \bibinfo{title}{Linear circuits over {GF(2)}},
\newblock \bibinfo{journal}{SIAM J. Comput.} \bibinfo{volume}{19}
  (\bibinfo{year}{1990}) \bibinfo{pages}{1064--1067}.
\bibitem[{Morgenstern(1973)}]{DBLP:journals/jacm/Morgenstern73}
\bibinfo{author}{J.~Morgenstern},
\newblock \bibinfo{title}{Note on a lower bound on the linear complexity of the
  fast {F}ourier transform},
\newblock \bibinfo{journal}{J. ACM} \bibinfo{volume}{20} (\bibinfo{year}{1973})
  \bibinfo{pages}{305--306}.
\bibitem[{B{\"u}rgisser et~al.(1997)B{\"u}rgisser, Clausen, and
  Shokrollahi}]{DBLP:books/daglib/0090316}
\bibinfo{author}{P.~B{\"u}rgisser}, \bibinfo{author}{M.~Clausen},
  \bibinfo{author}{M.~A. Shokrollahi}, \bibinfo{title}{Algebraic complexity
  theory}, volume \bibinfo{volume}{315} of \textit{\bibinfo{series}{Grundlehren
  der mathematischen Wissenschaften}}, \bibinfo{publisher}{Springer,
  Heidelberg}, \bibinfo{year}{1997}.
\bibitem[{Chor and Goldreich(1988)}]{DBLP:journals/siamcomp/ChorG88}
\bibinfo{author}{B.~Chor}, \bibinfo{author}{O.~Goldreich},
\newblock \bibinfo{title}{Unbiased bits from sources of weak randomness and
  probabilistic communication complexity},
\newblock \bibinfo{journal}{SIAM J. Comput.} \bibinfo{volume}{17}
  (\bibinfo{year}{1988}) \bibinfo{pages}{230--261}.
\bibitem[{Kushilevitz and Nisan(1997)}]{DBLP:books/daglib/0011756}
\bibinfo{author}{E.~Kushilevitz}, \bibinfo{author}{N.~Nisan},
  \bibinfo{title}{Communication Complexity}, \bibinfo{publisher}{Cambridge
  University Press}, \bibinfo{year}{1997}.
\bibitem[{Chor and Goldreich(1985)}]{DBLP:conf/focs/ChorG85}
\bibinfo{author}{B.~Chor}, \bibinfo{author}{O.~Goldreich},
\newblock \bibinfo{title}{Unbiased bits from sources of weak randomness and
  probabilistic communication complexity (extended abstract)},
\newblock in: \bibinfo{booktitle}{FOCS}, \bibinfo{publisher}{IEEE Computer
  Society}, \bibinfo{year}{1985}, pp. \bibinfo{pages}{429--442}.
\bibitem[{Kov{\'a}ri et~al.(1954)Kov{\'a}ri, S{\'o}s, and
  Tur{\'a}n}]{kovari1954problem}
\bibinfo{author}{T.~Kov{\'a}ri}, \bibinfo{author}{V.~S{\'o}s},
  \bibinfo{author}{P.~Tur{\'a}n},
\newblock \bibinfo{title}{On a problem of {K.} {Zarankiewicz}},
\newblock in: \bibinfo{booktitle}{Colloquium Math}, volume~\bibinfo{volume}{3},
  \bibinfo{year}{1954}, pp. \bibinfo{pages}{50--57}.
\bibitem[{Jukna(2001)}]{juknacombinatorics}
\bibinfo{author}{S.~Jukna}, \bibinfo{title}{Extremal Combinatorics - With
  Applications in Computer Science}, Texts in {T}heoretical {C}omputer
  {S}cience, \bibinfo{publisher}{Springer, Heidelberg}, \bibinfo{year}{2001}.
\bibitem[{Komargodski et~al.(2013)Komargodski, Raz, and
  Tal}]{DBLP:conf/focs/KomargodskiRT13}
\bibinfo{author}{I.~Komargodski}, \bibinfo{author}{R.~Raz},
  \bibinfo{author}{A.~Tal},
\newblock \bibinfo{title}{Improved average-case lower bounds for {DeMorgan}
  formula size},
\newblock in: \bibinfo{booktitle}{FOCS}, \bibinfo{publisher}{IEEE Computer
  Society}, \bibinfo{year}{2013}, pp. \bibinfo{pages}{588--597}.
\bibitem[{Canright(2010)}]{canright2005very}
\bibinfo{author}{D.~Canright},
\newblock \bibinfo{title}{A very compact s-box for {AES}},
\newblock in: \bibinfo{booktitle}{CHES}, volume \bibinfo{volume}{6049} of
  \textit{\bibinfo{series}{LNCS}}, \bibinfo{publisher}{Springer, Heidelberg},
  \bibinfo{year}{2010}.
\bibitem[{Paar(1995)}]{paar95}
\bibinfo{author}{C.~Paar},
\newblock \bibinfo{title}{Some remarks on efficient inversion in finite
  fields},
\newblock in: \bibinfo{editor}{B.~Whistler} (Ed.), \bibinfo{booktitle}{IEEE
  Internatiol Symposium on Information Theory}, volume \bibinfo{volume}{5162}
  of \textit{\bibinfo{series}{LNCS}}, \bibinfo{publisher}{Springer,
  Heidelberg}, \bibinfo{year}{1995}, p.~\bibinfo{pages}{58}.
\bibitem[{Satoh et~al.(2001)Satoh, Morioka, Takano, and Munetoh}]{SatohMTM01}
\bibinfo{author}{A.~Satoh}, \bibinfo{author}{S.~Morioka},
  \bibinfo{author}{K.~Takano}, \bibinfo{author}{S.~Munetoh},
\newblock \bibinfo{title}{A compact {R}ijndael hardware architecture with
  {S}-box optimization},
\newblock in: \bibinfo{editor}{C.~Boyd} (Ed.), \bibinfo{booktitle}{ASIACRYPT},
  volume \bibinfo{volume}{2248} of \textit{\bibinfo{series}{LNCS}},
  \bibinfo{publisher}{Springer, Heidelberg}, \bibinfo{year}{2001}, pp.
  \bibinfo{pages}{239--254}.
\bibitem[{Aaronson(2012)}]{cstheorystackexchange}
\bibinfo{author}{S.~Aaronson}, \bibinfo{title}{Thread on
  cstheory.stackexchange.com\hspace{0pt}},
  \bibinfo{howpublished}{http://cstheory.stackexchange.com/questions/1794/circuit\-lower\-bounds\-over\-arbitrary\-sets\-of\-gates},
  \bibinfo{year}{2012}.
\bibitem[{Sergeev(2012)}]{sergeevadditivecompl}
\bibinfo{author}{I.~Sergeev},
\newblock \bibinfo{title}{On additive complexity of a sequence of matrices},
\newblock \bibinfo{journal}{arXiv preprint}  (\bibinfo{year}{2012}).
\bibitem[{Sergeev(2014)}]{igorimproves}
\bibinfo{author}{I.~Sergeev}, \bibinfo{title}{On the or complexity of matrices
  and their complements},
  \bibinfo{howpublished}{http://lovelace.thi.informatik.uni-frankfurt.de/~jukna/Knizka/comment1.html},
  \bibinfo{year}{2014}.

\end{thebibliography}

\end{document}